\definecolor{ocre}{HTML}{800000}
\definecolor{green}{RGB}{0, 128, 0}
\definecolor{sky}{HTML}{C6D9F1}
\definecolor{skybox}{HTML}{5F86B3}
\newtheorem{theorem}{\textbf{Theorem}}
\newtheorem{lemma}{\textbf{Lemma}}
\newtheorem{corollary}{\textbf{Corollary}}
\begin{document}
\title{Crosstalk-Resilient Quantum MIMO for Scalable Quantum Communications}
	\author{Seid Koudia, Symeon Chatzinotas\\
	
	\thanks{
 The authors are with the Interdisciplinary Centre for Security, Reliability, and Trust (SnT), Luxembourg, L-1855 Luxembourg.

 }
    }

	\maketitle
\begin{abstract}
    We address the challenge of crosstalk in quantum multiplexing—an obstacle to scaling throughput in quantum communication networks. Crosstalk arises when physically coupled quantum modes interfere, degrading signal fidelity. We propose a mitigation strategy based on encoding discrete-variable (DV) quantum information into continuous-variable (CV) bosonic modes using Gottesman-Kitaev-Preskill (GKP) codes. By analyzing the effect of mode-mixing interference, we show that under specific noise strength conditions, the interaction can be absorbed into a gauge subsystem that leaves the logical content intact. We provide rigorous conditions for perfect transmission in the ideal case, derive the structure of the output codes and prove the existence of a gauge-fixing decoder enabling recovery of the logical information. Numerical simulations under displacement Gaussian noise illustrate the fidelity behavior and rate-fidelity tradeoff. Our results establish a coding-theoretic foundation for crosstalk-resilient multiplexing in quantum networks.
\end{abstract}
\section{Introduction}
The vision of a scalable quantum internet \cite{kimble2008quantum} depends critically on the ability to establish high-fidelity end-to-end entanglement across large, distributed quantum networks \cite{wehner2018quantum, koudia2023quantum}. Achieving this goal requires both increasing the \emph{throughput} of quantum links and maintaining the \emph{fidelity} of transmitted quantum states \cite{vischi2023simulating}. Multiplexing---the simultaneous use of multiple channels over a shared resource---has emerged as a key strategy for boosting quantum communication rates \cite{wengerowsky2018entanglement, seshadreesan2020continuous}. Meanwhile, diversity techniques \cite{oleynik2025diversity, ur2025mimo}---the use of redundant or parallel transmissions to improve reliability---offer a path toward robust state transfer under noise \cite{oleynik2025diversity, koudia2024spatial,wang2025exploiting}. These approaches are foundational for advanced quantum communication paradigms, including Quantum Multiple-Input Multiple-Output (MIMO) systems \cite{ur2025mimo,oleynik2025diversity}, which leverage multiple spatial modes to enhance data rates and robustness. However, the increased complexity of such systems makes them inherently more susceptible to crosstalk. Both multiplexing and diversity are fundamentally challenged by the problem of \emph{crosstalk}: interference between quantum modes caused by physical coupling, such as mode mixing in photonic hardware \cite{vischi2023simulating, hoch2025quantum, koudia2024physical}. Crosstalk can degrade the fidelity of quantum transmission and limit the effectiveness of network-level parallelism.

\color{black}
In quantum information science, DV and CV systems offer complementary strengths. DV systems, such as qubits, are naturally suited to logical operations and are standard in many communication protocols. However, they are highly susceptible to noise and crosstalk, especially in multiplexed settings. On the other hand, CV systems—particularly bosonic modes—offer a rich structure for encoding and robust error correction but often lack native digital logic. The interplay between these paradigms opens new possibilities: by encoding DV quantum information into CV modes via GKP codes, one can combine the discrete control and processing advantages of DV systems with the error resilience and resource efficiency of CV encodings. In this work, we propose to mitigate crosstalk in multiplexed quantum networks by encoding DV quantum states into CV bosonic modes using Gottesman-Kitael-Preskill (GKP) codes \cite{gottesman2001encoding}. GKP states are a powerful family of bosonic quantum error-correcting codes that embed DV logical information into structured grids in phase space \cite{conrad2022gottesman}. They have found broad application in quantum technologies, ranging from fault-tolerant quantum computing \cite{grimsmo2021quantum, wang2024passive, schmidt2022quantum, larsen2025integrated} and metrology \cite{labarca2024quantum} to recent advances in quantum communication \cite{schmidt2024error}. In particular, GKP codes are promising candidates for photonic quantum networks, including all-optical quantum repeater schemes \cite{fukui2021all}. These platforms inherently rely on CV hardware (e.g., squeezed light, beamsplitters, homodyne detection), making GKP encodings naturally compatible with physical-layer implementations.
\color{black}

In this paper we investigate whether encoding DV states into GKP modes enables crosstalk-resilient multiplexing---under specific, quantifiable conditions on the crosstalk strength. We show that when the crosstalk strength lies at specific values, and the GKP stabilizers are appropriately scaled, the interference caused by mode coupling can be absorbed into a gauge subsystem. As a result, the logical information is preserved and can be recovered deterministically. We analyze this effect rigorously, characterizing the structure of the output state, the code conditions for perfect transmission, and the emergent gauge symmetry. Furthermore, we show how this structure can be exploited through gauge-fixing procedures, enabling reliable logical recovery. Our work establishes a foundation for crosstalk-aware encoding in quantum networks based on CV hardware and opens up new directions in entanglement distribution through structured code design, thereby contributing to the development of robust Quantum MIMO communications.

\color{black}
The structure of the paper is as follows. In Section.~\ref{preliminaries}, we introduce the GKP code formalism, define our notation, and explain how DV quantum information is encoded into CV bosonic modes. In Section.~\ref{model}, we present the system model, including the crosstalk mechanism arising from mode mixing via beam splitter-induced, and formulate the problem of logical information preservation under such interference. Section.~\ref{results} contains the core theoretical results: we derive precise conditions under which crosstalk-induced displacements can be absorbed into the stabilizer structure of GKP codes, ensuring perfect logical transmission. We provide lattice-matching criteria, and prove a theorem identifying rational transmissivity values that allow for crosstalk-mitigated multiplexing. In Section.~\ref{output}, we analyze the structure of the output state, showing its decomposition into logical and gauge subsystems. We derive the form of the gauge entanglement, propose a gauge-fixing procedure, and construct a decoder using modular arithmetic and Clifford operations. This section also includes numerical simulations illustrating fidelity under Gaussian noise and trade-offs between code dimension and performance. Finally, in Section.~\ref{conclusions}, we conclude by summarizing our contributions and outlining future directions for applying structured GKP codes in scalable quantum networking.
\color{black}

\color{black}
\section{GKP States and logical encoding}
\label{preliminaries}
In this section, we introduce our notations and the GKP states. For a bosonic mode with creation and annihilation operators \( \hat{a} \) and \( \hat{a}^\dagger \) satisfying \( [\hat{a}, \hat{a}^\dagger] = 1 \), the quadrature operators are
\[
\hat{q} = \frac{\hat{a} + \hat{a}^\dagger}{\sqrt{2}}, \quad \hat{p} = \frac{-i(\hat{a} - \hat{a}^\dagger)}{\sqrt{2}}
\]
where we choose the convention \( \hbar = 1 \). The translation operator in the phase space is defined as
\[
\hat{T}(\boldsymbol{u}) \equiv \exp[i(u_p \hat{q} - u_q \hat{p})],
\]
where \( \boldsymbol{u} = (u_q, u_p) \in \mathbb{R}^2 \) is the displacement vector. Notice that \( \hat{T}(\boldsymbol{u}) = \hat{D}((u_q + i u_p)/\sqrt{2}) \), where \( \hat{D}(\alpha) = \exp(\alpha \hat{a}^\dagger - \alpha^* \hat{a}) \) is the usual displacement operator. The commutation relation is
\[
\hat{T}(\boldsymbol{u}) \hat{T}(\boldsymbol{v}) = e^{-i \omega(\boldsymbol{u}, \boldsymbol{v})} \hat{T}(\boldsymbol{v}) \hat{T}(\boldsymbol{u}),
\]
where the phase \( \omega(\boldsymbol{u}, \boldsymbol{v}) = u_q v_p - u_p v_q = \det\begin{pmatrix} u_q & u_p \\ v_q & v_p \end{pmatrix} \) is the oriented area of the parallelogram spanned by vectors \( \boldsymbol{u} \) and \( \boldsymbol{v} \).

GKP states are defined as states that are invariant under certain phase space displacements. Given two displacement operators \( \hat{S}_X = \hat{T}(\boldsymbol{u}) \) and \( \hat{S}_Z = \hat{T}(\boldsymbol{v}) \), the states that are invariant under the stabilizers \( \hat{S}_X \) and \( \hat{S}_Z \) form a \( d \)-dimensional GKP code space \cite{conrad2022gottesman, mensen2021phase}:
\[
\mathcal{C}(\boldsymbol{u}, \boldsymbol{v}) \equiv \left\{ |\psi\rangle \,:\, \hat{S}_X |\psi\rangle = \hat{S}_Z |\psi\rangle = |\psi\rangle \right\}
\]
if the area \( \omega(\boldsymbol{u}, \boldsymbol{v}) = 2\pi d \) and \( d \) is an integer. We can equivalently denote a GKP code as \( \mathcal{C}_{d, \mathbf{S}} \) where \( \mathbf{S} \) is the normalized basis matrix for the GKP lattice defined as
\[
\sqrt{2\pi d} \, \mathbf{S} = \begin{pmatrix} u_q & u_p \\ v_q & v_p \end{pmatrix} \equiv (\boldsymbol{u}, \boldsymbol{v}).
\]
It is easy to see that \( \mathbf{S} \) is a \( 2 \times 2 \) symplectic matrix since \( \det(\mathbf{S}) = 1 \) \footnote{\color{black}The condition of having $det(S)=1$ is sufficient in this case, because by construction $S$ preserves the lattice area.}. Throughout the paper, we will interchangeably use the notation \( \mathcal{C}(\boldsymbol{u}, \boldsymbol{v}) \) or \( \mathcal{C}_{d, \mathbf{S}} \) to represent a GKP code based on convenience.

The logical operators for a GKP code \( \mathcal{C}(\boldsymbol{u}, \boldsymbol{v}) \) are \( \hat{X} = \hat{T}(\boldsymbol{u}/d) \) and \( \hat{Z} = \hat{T}(\boldsymbol{v}/d) \), satisfying \( \hat{X} \hat{Z} = e^{2\pi i/d} \hat{Z} \hat{X} \). We choose the eigenstates \( |\mu\rangle, \mu = 0, \dots, d-1 \) of \( \hat{Z} \) as the basis states of the GKP code \( \mathcal{C}(\boldsymbol{u}, \boldsymbol{v}) \), where
\[
\hat{Z}|\mu\rangle = e^{2\pi i \mu/d} |\mu\rangle, \quad \hat{X}|\mu\rangle = |\mu + 1 \mod d\rangle.
\]

For a square lattice \footnote{\color{black}For illustrative purposes, we employ a square lattice; any other lattice with the same phase space area is equivalent via a rotation, offering no inherent advantage.} GKP code \( \mathcal{C}_{d, \mathbf{S}} \) where \( \mathbf{S} = \mathbb{I}_2 \) is the \( 2 \times 2 \) identity matrix, the basis states in the position space are
\[
|\mu\rangle = \sum_{k \in \mathbb{Z}} |\hat{q} = \sqrt{2\pi d}(k + \mu/d) \rangle, \quad \mu = 0, \dots, d-1.
\]
The logical operators for a square lattice GKP code \( \mathcal{C}_{d, \mathbf{S}} \) are
\[
\hat{X} = \exp\left( i \sqrt{2\pi/d} \hat{p} \right), \quad \hat{Z} = \exp\left( i \sqrt{2\pi/d} \hat{q} \right).
\]

Finally, when \( d = 1 \), the GKP state \( |\mu = 0\rangle \) is called a qunaught state since it does not carry any quantum information.

\section{Transmission Model and Problem Statement}
\label{model}
We now consider two GKP-encoded modes, labeled $A$ and $B$, each carrying logical information via lattices defined by vectors $(\boldsymbol{f}_A, \boldsymbol{g}_A)$ and $(\boldsymbol{f}_B, \boldsymbol{g}_B)$ respectively. The logical operators for each mode are:
\[
\hat{X}_i = T\left( \frac{\boldsymbol{f}_i}{d_i} \right), \quad \hat{Z}_i = T\left( \frac{\boldsymbol{g}_i}{d_i} \right), \quad i \in \{A, B\}.
\]
\color{black}
These operators generate the logical Pauli group for an embedded $d_i$-dimensional DV system within each CV  bosonic mode. The eigenstates of $\hat{Z}_i$ serve as the computational basis $\{|\mu\rangle_i\}_{\mu=0}^{d_i-1}$, which are localized along a one-dimensional lattice in phase space. The logical operator $\hat{X}_i$ acts as a shift between these basis states, while $\hat{Z}_i$ applies a phase. Together, $\hat{X}_i$ and $\hat{Z}_i$ satisfy the Weyl commutation relation:
\[
\hat{Z}_i \hat{X}_i = e^{2\pi i / d_i} \hat{X}_i \hat{Z}_i,
\]
mirroring the algebra of qudit Pauli operators. In this way, the GKP code enables encoding of DV logical states (such as qubits or qudits) into structured CV states. 
\color{black}
\subsection{Crosstalk via node mixing}

\begin{figure}[t]
    \centering
    \includegraphics[width=1\linewidth]{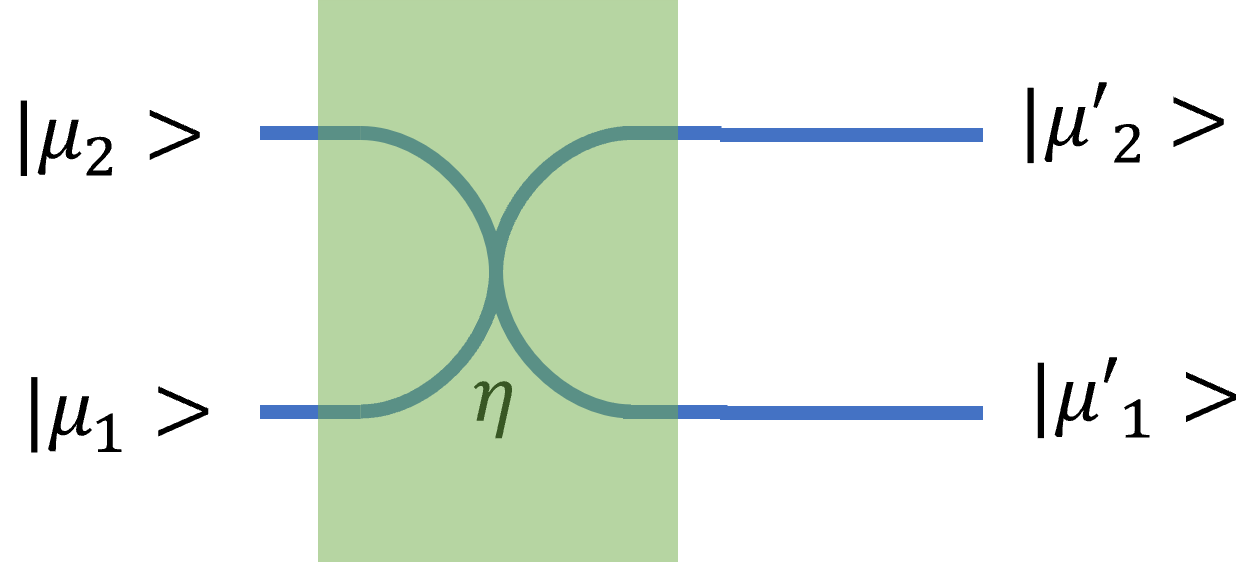}
    \caption{Crosstalk between two GKP-encoded modes via mode mixing beam splitter interaction}
    \label{fig:crosstalk}
\end{figure}
\color{black}
We model the crosstalk arising from mode mixing interfecrence by a beam splitter $\hat{U}_\eta$, jointly transforming the modes, while  $\eta$ relflecting the strength of mode mixing is considered to be stochastic. The beam splitter acts as a linear symplectic transformation:

\begin{equation}
    \hat{U}_\eta |q_A\rangle |q_B\rangle = |\sqrt{\eta} q_A + \sqrt{1 - \eta} q_B\rangle \otimes |\sqrt{\eta} q_B - \sqrt{1 - \eta} q_A\rangle
\end{equation}

where  $\eta$ is modeled by a log-normal distribution, characterized by the probability density function:

\begin{equation}
    f_\eta(\eta) = \frac{1}{\eta \sigma_c \sqrt{2\pi}} \exp\left(-\frac{(\ln \eta - \mu)^2}{2\sigma_c^2}\right)
\end{equation}
with  \( \mu \) being the mean of \( \ln \eta \), and \( \sigma_c \) is its standard deviation reflecting the strength of fading.
This operation introduces undesired correlation between the two modes $A$ and $B$ reflected in the displacement, which potentially disrupts independent logical recovery.
\color{black}
\subsubsection{Problem Statement}
\color{black}
We aim to determine conditions under which logical information in both GKP-encoded modes is preserved after the crosstalk transformation $\hat{U}_\eta$. Ideally, we require:
\begin{equation}
   D(\eta) \hat{U}_\eta |\lambda_A\rangle |\lambda_B\rangle = |\lambda_A\rangle |\lambda_B\rangle
\end{equation}
with $D(\eta)$ would be a decoder depending on a priori knowledge of the average behaviour of the stochastic $\eta$,
or more generally, that the output remains within the original GKP code spaces up to correctable displacements. The main question is: for which pairs of lattices $(\boldsymbol{f}_A, \boldsymbol{g}_A)$ and $(\boldsymbol{f}_B, \boldsymbol{g}_B)$ is the logical subspace preserved under the mode-mixing crosstalk transformation $\hat{U}_\eta$? Equivalently, we ask: under what conditions does the joint stabilizer group transform into an equivalent stabilizer group (modulo correctable errors), such that the logical content in both modes is unaffected by crosstalk?

More precisely, let each mode’s GKP code be defined by stabilizer lattices $\Lambda_A = \mathrm{Span}_{\mathbb{Z}}\{\boldsymbol{f}_A, \boldsymbol{g}_A\}$ and $\Lambda_B = \mathrm{Span}_{\mathbb{Z}}\{\boldsymbol{f}_B, \boldsymbol{g}_B\}$. After undergoing crosstalk, these are mapped to new lattices $\tilde{\Lambda}_A$, $\tilde{\Lambda}_B$ with basis vectors $(\tilde{\boldsymbol{f}}_A, \tilde{\boldsymbol{g}}_A)$ and $(\tilde{\boldsymbol{f}}_B, \tilde{\boldsymbol{g}}_B)$, respectively. Logical preservation (up to correctable displacement) requires that the original stabilizer lattice be contained in the transformed one:
\[
\Lambda_A \subseteq \tilde{\Lambda}_A, \quad \Lambda_B \subseteq \tilde{\Lambda}_B,
\]
and that any displacement incurred by the transformation lies within the correctable region—e.g., the Voronoi cell \cite{conrad2022gottesman}—of the respective new lattices. Under these conditions, the logical subspace is preserved and can be recovered by standard GKP error correction and appropriate decoding.
\color{black}

\section{Results: Crosstalk-Mitigated Multiplexing via GKP Encodings}
\label{results}
In discrete-variable (DV) quantum information processing, crosstalk between spatial or spectral modes can be a major source of noise, \textcolor{black}{if left untreated}. When two modes undergo a mode mixing, information encoded in one mode will generally affect the other, leading to undesired logical errors. In this section, we show how such coupling—modeled as a passive linear interaction—can be made harmless by encoding the DV information into GKP code states. This allows us to multiplex quantum information in a crosstalk-robust manner.

\subsection{The mode mixing Crosstalk}

In order to understand the effect of crosstalk on stabilizers of each mode, we should understand how generic displacement operators transform under this mode mixing model. The reason is that stabilizers of GKP codes are given in terms of displacement operators as seen in Sec.~\ref{preliminaries}. Under beamsplitter evolution, the product of displacements, each acting on its corresponding mode, transforms as:
\begin{equation}
\hat{U}_\eta T_1(\alpha) T_2(\beta) \hat{U}_\eta^\dagger = T_1\left(\sqrt{\eta}\alpha + \sqrt{1 - \eta}\beta\right)
T_2\left(\sqrt{\eta}\beta - \sqrt{1 - \eta}\alpha\right).
\label{eq:beamsplitter_displacement}
\end{equation}
This reveals that displacements applied to one mode result in induced displacements on the other, as expected from a crosstalk mechanism. As a matter of fact, our goal is to find a way to absorb these induced displacements into the stabilizers of a code, such that the logical information remains unchanged.

\subsection{Mitigating Crosstalk via Stabilizer Embedding}

Suppose we wish to transmit logical states encoded in mode 1 and mode 2, undergoing crosstalk, while both modes are intialized  initialized in a GKP codewords. In order to prevent the induced displacement from corrupting both modes, we aim to ensure that this displacement falls within the code’s stabilizer lattice.

\paragraph*{Observation:} If the second mode is initialized in a GKP code with stabilizers \(\mathcal{S}_2 = \langle T_2(\mathbf{u}_2), T_2(\mathbf{v}_2) \rangle\), then any displacement of the form \(T_2(\gamma)\) with \(\gamma \in \Lambda_2\) (the stabilizer lattice) acts trivially on the code space.\\
Now suppose we apply a displacement \(T_1(\alpha)\) to mode 1. From Eq.~\eqref{eq:beamsplitter_displacement}, we notice that
to ensure the induced displacement on mode 2 does not affect the output we require that:
\begin{equation}
\sqrt{1 - \eta} \, \alpha = \sqrt{\eta} \beta \Rightarrow \quad \alpha = \frac{\sqrt{\eta}}{\sqrt{1 - \eta}} \mathbf{\beta}, \quad \mathbf{\beta} \in \Lambda_2.
\label{eq:alpha_requirement}
\end{equation}
The key point is that \(\beta\) was arbitrary in \(\Lambda_2\), so if we define $\beta$ to be one of the lattice vectors of the GKP code of mode 2 $(\mathbf{u}_2,\mathbf{v}_2)$ then:
\begin{equation}
\tilde{\mathbf{u}}_1 = \sqrt{\frac{\eta}{1 - \eta}} \, \mathbf{u}_2, \quad
\tilde{\mathbf{v}}_1 = \sqrt{\frac{\eta}{1 - \eta}} \, \mathbf{v}_2,
\end{equation}
then \(\tilde{\mathbf{u}}_1, \tilde{\mathbf{v}}_1\) generate a new lattice in mode 1, and any displacement in the group:
\begin{equation}
\Lambda(\tilde{\mathbf{u}}_1, \tilde{\mathbf{v}}_1) := \{ T_1(s \tilde{\mathbf{u}}_1 + t \tilde{\mathbf{v}}_1) \,|\, s, t \in \mathbb{Z} \}
\label{lattice}
\end{equation}
can be realized through the beamsplitter interaction from mode 2.


Subsequently,  investigate the effect of this transmofrmation of the symplectic form of the generators \(\tilde{\mathbf{u}}_1, \tilde{\mathbf{v}}_1\), assuming \(\omega(\mathbf{u}_2, \mathbf{v}_2) = 2\pi d_2\). Then:
\[
\omega(\tilde{\mathbf{u}}_1, \tilde{\mathbf{v}}_1)
= \left( \frac{\eta}{1 - \eta} \right) \omega(\mathbf{u}_2, \mathbf{v}_2)
= \frac{\eta}{1 - \eta} \cdot 2\pi d_2.
\]
In order to define a GKP code on mode 1 with stabilizer area \(\omega = 2\pi d_1\), the new lattice vectors \(\tilde{\mathbf{u}}_1, \tilde{\mathbf{v}}_1\) must satisfy:
\[
\omega(\tilde{\mathbf{u}}_1, \tilde{\mathbf{v}}_1) = \frac{2\pi q}{p d_1}, \quad \text{for some } m, k \in \mathbb{Z}.
\]

Combining the expressions yields:
\begin{equation}
    \frac{\eta}{1 - \eta} \cdot 2\pi d_2 = \frac{2\pi q}{p d_1}
\quad \Rightarrow \quad
\eta = \frac{q}{q + p d_1 d_2}
\label{cond}
\end{equation}
This provides a discrete set of values for \(\eta\) where perfect GKP logical transmission is possible. We summarize this result below. 

\begin{theorem}[Perfect Transmission Under Crosstalk]
Let two GKP codes \(\mathcal{C}_{S_1, d_1}, \mathcal{C}_{S_2,d_2}\) with stabilizer areas \(2\pi d_1\) and \(2\pi d_2\) be encoded in two bosonic modes undergoing crosstalk \(\hat{U}_\eta\). Then there exists a choice of stabilizers for both codes such that the logical information is perfectly preserved if and only if:
\[
\eta = \frac{q}{q + p d_1 d_2}, \quad q, p \in \mathbb{Z}.
\]
\end{theorem}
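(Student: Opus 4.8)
My plan is to prove both implications directly from the displacement conjugation rule \eqref{eq:beamsplitter_displacement}, tracking what $\hat{U}_\eta$ does to the joint stabilizer group $\mathcal{S}=\langle T_1(\mathbf{u}_1),T_1(\mathbf{v}_1),T_2(\mathbf{u}_2),T_2(\mathbf{v}_2)\rangle$ of the product code $\mathcal{C}_{S_1,d_1}\otimes\mathcal{C}_{S_2,d_2}$ and to its logical coset representatives. The controlling observation is that each single-mode generator is sent by $\hat{U}_\eta(\cdot)\hat{U}_\eta^\dagger$ to a two-mode displacement with a nonzero component on the \emph{other} mode; perfect logical transmission holds exactly when every such cross-component can be reabsorbed into an honest (discrete, finite-covolume) single-mode stabilizer lattice, so that the output is again a GKP stabilizer code on the two modes whose logical algebra is Clifford-equivalent to the input's, with the leftover Gaussian relabelling undone by the decoder $D(\eta)$ built in Section~\ref{output}.

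\textbf{Sufficiency ($\Leftarrow$).} Suppose $\eta=q/(q+pd_1d_2)$, so $\eta/(1-\eta)=q/(pd_1d_2)\in\mathbb{Q}$. First I would fix an arbitrary basis $(\mathbf{u}_2,\mathbf{v}_2)$ of $\Lambda_2$ with $\omega(\mathbf{u}_2,\mathbf{v}_2)=2\pi d_2$ and, following the construction preceding the theorem, choose the mode-$1$ lattice $\Lambda_1$ so that it is commensurate with the inherited lattice $\Lambda(\tilde{\mathbf{u}}_1,\tilde{\mathbf{v}}_1)=\sqrt{\eta/(1-\eta)}\,\Lambda_2$ of \eqref{lattice}; this is possible precisely because $\eta/(1-\eta)$ is rational, and by \eqref{cond} the inherited lattice has symplectic area $2\pi q/(p d_1)$, a rational multiple of $2\pi$, hence fits inside a GKP-$d_1$ structure after passing to the lattice generated by $\Lambda_1$ and $\Lambda(\tilde{\mathbf{u}}_1,\tilde{\mathbf{v}}_1)$. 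Then I would check, generator by generator via \eqref{eq:beamsplitter_displacement} and \eqref{eq:alpha_requirement}, that modulo this finer lattice every element of $\hat{U}_\eta\mathcal{S}\hat{U}_\eta^\dagger$ is a product of single-mode stabilizers, that $\hat{X}_i,\hat{Z}_i$ are carried to logical operators of the output code with the same Weyl phases, and that GKP error correction onto the refined lattices followed by the Clifford that inverts the relabelling returns $|\lambda_A\rangle|\lambda_B\rangle$ — i.e.\ that $D(\eta)$ exists.

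\textbf{Necessity ($\Rightarrow$).} Conversely, assume some choice of stabilizers achieves perfect logical transmission. Feeding a pure mode-$1$ stabilizer $T_1(\mathbf{w})$, $\mathbf{w}\in\Lambda_1$, through \eqref{eq:beamsplitter_displacement} yields $T_1(\sqrt\eta\,\mathbf{w})\,T_2(-\sqrt{1-\eta}\,\mathbf{w})$, whose mode-$2$ factor must act trivially on mode $2$'s code — forcing $\sqrt{1-\eta}\,\mathbf{w}\in\Lambda_2$ — while its mode-$1$ factor must live in a GKP lattice for mode $1$; symmetrically for $\mathbf{w}\in\Lambda_2$. Running this over the four generators forces $\sqrt{(1-\eta)/\eta}\,\Lambda_1$ and $\Lambda_2$ to be commensurate (equal up to finite index in a common rank-$2$ lattice). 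Two rank-$2$ lattices related by a global dilation $c=\sqrt{(1-\eta)/\eta}$ are commensurate only if $c^2\in\mathbb{Q}$, i.e.\ $\eta/(1-\eta)\in\mathbb{Q}$; matching the symplectic areas $\omega(\Lambda_1)=2\pi d_1$ and $\omega(\Lambda_2)=2\pi d_2$ then pins the ratio to $\eta/(1-\eta)=q/(pd_1d_2)$ for integers $q,p$, which rearranges to $\eta=q/(q+pd_1d_2)$ (so the admissible $\eta$ are exactly the rationals in $(0,1)$).

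\textbf{Anticipated main obstacle.} The hard part is the necessity direction — specifically, justifying that perfect recovery really does force the \emph{whole} transformed stabilizer group to be a discrete, finite-covolume lattice on each mode, not merely some convenient subgroup of it. The worry is an exotic pair of incommensurate input lattices for which the two logical registers are somehow still jointly decodable; to exclude this I expect to need the fact that if the transformed stabilizer group fails to be a lattice then it is dense along some phase-space direction, so the associated gauge coordinate is non-compact and no CPTP decoder can reconstruct the $d_1\times d_2$-dimensional logical space from it. Making ``information leaked into a non-compact gauge mode is unrecoverable'' precise is the crux; once it is in place, the commensurability/area bookkeeping and the explicit sufficiency construction are routine given the apparatus already developed in Sections~\ref{results} and~\ref{output}.
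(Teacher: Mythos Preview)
Your sufficiency direction is essentially the paper's own argument. The paper does not give Theorem~1 a separate proof environment; the ``proof'' is the derivation immediately preceding the theorem statement, which rescales the mode-$2$ lattice by $\sqrt{\eta/(1-\eta)}$ to obtain $(\tilde{\mathbf{u}}_1,\tilde{\mathbf{v}}_1)$, computes the symplectic area $\omega(\tilde{\mathbf{u}}_1,\tilde{\mathbf{v}}_1)=\tfrac{\eta}{1-\eta}\cdot 2\pi d_2$, and imposes that this equal $2\pi q/(pd_1)$ so as to fit inside a GKP-$d_1$ structure, arriving at Eq.~\eqref{cond}. Your sufficiency argument recapitulates exactly this construction and appeals to the same displayed equations, so on that half you and the paper coincide.

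On necessity you go further than the paper does. The paper asserts the ``if and only if'' but supplies only the constructive direction; no converse argument appears anywhere in Section~\ref{results}. Your route --- push single-mode stabilizers through \eqref{eq:beamsplitter_displacement}, demand that the cross-mode component be absorbable into an honest lattice, deduce commensurability of $\sqrt{(1-\eta)/\eta}\,\Lambda_1$ with $\Lambda_2$, hence $\eta/(1-\eta)\in\mathbb{Q}$ --- is the natural one, and you have correctly isolated the only real obstruction: excluding the possibility that a CPTP decoder recovers the $d_1\times d_2$ logical space even when the transformed stabilizer group is dense along some phase-space direction. That is genuine content the paper simply does not address. One small wrinkle in your phrasing: when you write that the mode-$2$ factor ``must act trivially on mode $2$'s code --- forcing $\sqrt{1-\eta}\,\mathbf{w}\in\Lambda_2$,'' bear in mind (as Section~\ref{output} shows) that the output lives in a \emph{refined} code $\mathcal{C}_{nd_2}$, not the input $\mathcal{C}_{d_2}$; the correct requirement is that the cross-component lie in some discrete output lattice commensurate with $\Lambda_2$, not in $\Lambda_2$ itself. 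This does not change the conclusion (commensurability still forces $\eta\in\mathbb{Q}$), but the argument should be routed through the output lattice rather than the input one.
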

The above implies a matching condition between the stabilizer lattices. In particular, if \(S_1\) and \(S_2\) are generator matrices for the two lattices, then they must be related by:
\[
S_2 = 
\begin{pmatrix}
\sqrt{\frac{p_2 q_1}{p_1 q_2}} & 0 \\
0 & \sqrt{\frac{q_1 p_2}{p_2 q_1}}
\end{pmatrix}
S_1,
\]
under the constraint \(q = q_1 q_2\), \(p = p_1 p_2\). This guarantees that each logical mode’s stabilizers are compatible with the crosstalk pattern imposed by mode mixing. \color{black} One may ask whether adapting the lattices to satisfy the matching condition sacrifices GKP code quality. Importantly, \emph{the total lattice area is preserved} in the symplectic sense. However, individual codes may require rescaling — effectively increasing the physical extent of their stabilizer generators — to satisfy the rational compatibility constraint:
\[
\omega(\tilde{\boldsymbol{u}}, \tilde{\boldsymbol{v}}) = \frac{2\pi q}{p d}, \quad \text{with } d \in \mathbb{Z}.
\]

\color{black}

\section{Output State Structure}
\label{output}
In this section,  we would like to understand the structure of the output states and their dependence on the crosstalk parameter in order to design a decoder that is able to retrieve meaningfully the encoded logical information. Accordingly, 
We analyze the output state resulting from sending two GKP codewords \(|\mu_1\rangle \otimes |\mu_2\rangle\) where $\eta$ is an appropriate rational number for the corresponding code dimensions as highlighted by Eq.~\ref{cond}. We characterize the structure of the output code, derive explicit expressions for the resulting quantum state, and interpret the emergence of a gauge subsystem both algebraically and operationally. Finally, we provide a decoding strategy to retrieve the logical information in the output states. 

\subsection{Code Dimension After crosstalk Coupling}
Despite that the logical information is preserved in the output due to invariance in the stabilizer, the former may be encoded in an embedding space, different from the input space. As such, we provide the following lemma.
\begin{lemma}[Output Code Dimensions]
Let the input GKP codes be $\mathcal{C}_{S_1,d_1}$ and $\mathcal{C}_{S_2, d_2}$. Then after  crosstalk with $\eta$ satisfying Eq.~\ref{cond}, the output state lies in the code basis:
\begin{equation}
\mathcal{C}_{nd_1} \otimes \mathcal{C}_{nd_2}, \quad \text{with } n = q + p d_1 d_2. 
\end{equation}
\end{lemma}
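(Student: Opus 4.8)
The plan is to track how the stabilizer lattice of each mode transforms under the beamsplitter and to identify the smallest integer $n$ such that the transformed stabilizer group, intersected with the single-mode displacement group, is itself a GKP stabilizer lattice of some integer dimension. First I would recall from Eq.~\eqref{eq:beamsplitter_displacement} that $\hat{U}_\eta$ maps $T_1(\boldsymbol{\alpha})T_2(\boldsymbol{\beta})$ to $T_1(\sqrt{\eta}\boldsymbol{\alpha}+\sqrt{1-\eta}\boldsymbol{\beta})T_2(\sqrt{\eta}\boldsymbol{\beta}-\sqrt{1-\eta}\boldsymbol{\alpha})$. Applying this to each of the four generators $T_1(\boldsymbol{f}_1), T_1(\boldsymbol{g}_1), T_2(\boldsymbol{f}_2), T_2(\boldsymbol{g}_2)$ of the joint input stabilizer group, I obtain four two-mode displacement operators whose $\mathbb{Z}$-span is the output stabilizer group. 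With $\eta = q/(q+pd_1d_2)$ so that $\sqrt{\eta}=\sqrt{q}/\sqrt{n}$ and $\sqrt{1-\eta}=\sqrt{pd_1d_2}/\sqrt{n}$ where $n = q+pd_1d_2$, and with the matching condition $S_2 = \mathrm{diag}(\sqrt{pd_1d_2/q},\,\sqrt{q/(pd_1d_2)})\,S_1$ (written here in the normalized form that makes the two scaled lattices commensurate), these combine so that the purely mode-1 part of the output stabilizer lattice is generated by vectors of the form $\boldsymbol{f}_1/\sqrt{n}$-type scalings — i.e. the output mode-1 lattice is a refinement of the input lattice by a factor related to $1/\sqrt{n}$ in each quadrature, giving symplectic area $2\pi d_1/n \cdot n = 2\pi n d_1$ once one passes to the coarsest sublattice that still acts trivially on a codespace.

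The key step is the dimension count: I would show that the logical operators of mode 1 after the transformation are $\hat{X}_1^{(\mathrm{out})} = T_1(\tilde{\boldsymbol{f}}_1/(nd_1))$ and similarly for $\hat{Z}_1$, so that $\hat{X}_1^{(\mathrm{out})}\hat{Z}_1^{(\mathrm{out})} = e^{2\pi i/(nd_1)}\hat{Z}_1^{(\mathrm{out})}\hat{X}_1^{(\mathrm{out})}$, which forces the output code in mode 1 to be $nd_1$-dimensional, and symmetrically $nd_2$-dimensional in mode 2. Concretely, the output stabilizer group restricted to mode 1 has generators whose symplectic area is $2\pi n d_1$ (the original $2\pi d_1$ stretched by the factor $n$ coming from the denominator in $\eta$), and a GKP code with stabilizer area $2\pi N$ encodes an $N$-dimensional system; applying this with $N = nd_1$ and $N = nd_2$ yields the claimed code basis $\mathcal{C}_{nd_1}\otimes\mathcal{C}_{nd_2}$. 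I would verify that $n = q + pd_1d_2$ is indeed an integer, which is immediate from $q,p,d_1,d_2 \in \mathbb{Z}$, and that it is the minimal such integer making all four transformed generators lie in a common integer-dimensional GKP lattice pair — minimality following because $\eta$ in lowest terms has denominator exactly $n$.

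The main obstacle I anticipate is bookkeeping the mixing carefully enough to confirm that the output stabilizer group genuinely factorizes as a product of a mode-1 GKP lattice and a mode-2 GKP lattice, rather than remaining an entangled (non-product) symplectic lattice in the full four-dimensional phase space; the gauge-subsystem structure discussed in the next subsection is precisely the statement that this factorization holds only after a suitable change of symplectic basis. For the purposes of this lemma it suffices to exhibit the two single-mode sublattices of areas $2\pi nd_1$ and $2\pi nd_2$ and note that the output state, being stabilized by them, lies in $\mathcal{C}_{nd_1}\otimes\mathcal{C}_{nd_2}$; the finer claim that the logical information sits inside a gauge-decomposed subsystem of this larger space is deferred. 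A secondary subtlety is ensuring the rescaling implied by the matching condition is applied consistently to both $\boldsymbol{f}$ and $\boldsymbol{g}$ generators so that the symplectic area scaling is exactly $n$ and not $n$ in one quadrature and $1$ in the other; this is handled by the diagonal form of the matching matrix $S_2 = \mathrm{diag}(\sqrt{p_2q_1/(p_1q_2)},\sqrt{q_1p_2/(p_2q_1)})\,S_1$, whose determinant is $1$, so the area is redistributed between quadratures but the product — hence the code dimension — comes out to the stated $nd_1$ and $nd_2$.
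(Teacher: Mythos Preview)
Your proposal is correct and follows essentially the same route as the paper: both arguments push the four stabilizer generators through the beamsplitter transformation of Eq.~\eqref{eq:beamsplitter_displacement}, use the rational form $\eta = q/(q+pd_1d_2)$ together with the lattice-matching condition, and then read off the output code dimension from the symplectic area of the resulting single-mode stabilizer lattice, obtaining $\omega_{\mathrm{out},i} = 2\pi n d_i$ with $n = q + p d_1 d_2$. Your write-up is somewhat more explicit about the factorization caveat (that the full output lattice need not be a product lattice until one passes to the gauge decomposition) and about minimality of $n$, both of which the paper leaves implicit, but these are refinements of the same argument rather than a different one.
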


\begin{proof}
Let the input GKP codes $\mathcal{C}_{S_1,d_1}$ and $\mathcal{C}_{S_2,d_2}$ be defined by stabilizer lattices with lattice vectors  $(\mathbf{u}_1, \mathbf{v}_1) $ and  $(\mathbf{u}_2, \mathbf{v}_2)$, satisfying:
\begin{equation}
 \omega(\mathbf{u}_1, \mathbf{v}_1) = 2\pi d_1, \quad \omega(\mathbf{u}_2, \mathbf{v}_2) = 2\pi d_2
\end{equation}
These define the logical area of each code. Under the action of the crosstalk, displacement operators transform as:
\begin{equation}
U_\eta T_1(\alpha) T_2(\beta) U_\eta^\dagger = T_1(\sqrt{\eta} \alpha + \sqrt{1 - \eta} \beta) \cdot T_2(\sqrt{\eta} \beta - \sqrt{1 - \eta} \alpha)
\label{commutation}
\end{equation}
To preserve the GKP code structure under this transformation, the stabilizer displacements of one mode must map into stabilizer displacements of the other. A sufficient condition is:
\begin{equation}
\tilde{\mathbf{u}}_1 = \sqrt{\frac{\eta}{1 - \eta}} \mathbf{u}_2, \quad
\tilde{\mathbf{v}}_1 = \sqrt{\frac{\eta}{1 - \eta}} \mathbf{v}_2
\end{equation}
so that transformed stabilizers from mode 2 cancel displacements in mode 1, where $(\tilde{\boldsymbol{u}},\tilde{\boldsymbol{v}})$ denote the transformed stabilizer generators of mode 1 matching the condition in Eq.~\ref{eq:alpha_requirement}. The  mode remains in a valid GKP code space only if the stabilizer lattice $\Lambda_1' = \mathrm{Span}_{\mathbb{Z}}\{\tilde{\boldsymbol{u}}, \tilde{\boldsymbol{v}}\}$ contains a sublattice $(\boldsymbol{u}_1, \boldsymbol{v}_1)$ such that:
\begin{equation}
\tilde{\boldsymbol{u}}_1 = \frac{r_1}{d_1} \boldsymbol{u}_1 + t_1 \boldsymbol{v}_1, \quad \tilde{\boldsymbol{v}}_1 = \frac{r_2}{d_1} \boldsymbol{v}_1 + t_2 \boldsymbol{u}_1
\label{trans}
\end{equation}
for integers $r_1,r_2t_1, t_2$. That is, the logical shifts must lie in the 2D plane spanned by the transformed stabilizers. From Eq.~\ref{trans}, we note that, if we choose $\alpha=kd_1\tilde{\mathbf{u}}_1=m\mathbf{u}_1$ this yields:
\begin{equation}
\omega_{out,1} = \frac{q \omega(\mathbf{u}_1, \mathbf{v}_1)}{ \eta} = 2\pi n d_1 
\end{equation}
Applying the same logic to mode 2, we note that the output codes must have symplectic areas:
\begin{equation}
\omega_{\text{out},1} = 2\pi n d_1, \quad \omega_{\text{out},2} = 2\pi n d_2,
\end{equation}
corresponding to GKP codes \( \mathcal{C}_{n d_1} \) and \( \mathcal{C}_{n d_2} \). This completes the proof.
\end{proof}

\subsection{General Output State Derivation}
Since we discovered that the output GKP codes live in a larger space scaled by the crosstalk parameter $n$, we have to investigate the structure of the output states in a proper basis. 
\begin{theorem}[General Output State]
Let \( \alpha_1, \alpha_2 \in \mathbb{Z} \) satisfy the Bézout identities:
\[
\alpha_1 q + \beta_1 p d_1 d_2 = 1, \quad \alpha_2 q + \beta_2 p d_1 d_2 = 1.
\]
Then the output of the mode mixing \( U_\eta \) applied to the input state \(|\mu_1\rangle \otimes |\mu_2\rangle\) is:
\[
|\mu_1, \mu_2\rangle = \frac{1}{\sqrt{n}} \sum_{j=0}^{n-1}
\left|\mu_1 \alpha_1 n + j p d_1 d_2 \right\rangle \otimes
\left|\mu_2 \alpha_2 n + j q d_2 \right\rangle,
\]
where the output states belong to \( \mathcal{C}_{n d_1} \otimes \mathcal{C}_{n d_2} \).
\end{theorem}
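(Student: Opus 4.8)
The plan is to compute $\hat U_\eta\,|\mu_1\rangle\otimes|\mu_2\rangle$ directly in the position representation. First I would write each input codeword as the Dirac comb $|\mu_i\rangle=\sum_{k_i\in\mathbb Z}|\hat q=s_i(\mu_i+d_ik_i)\rangle$, where the logical spacings $s_1,s_2$ are the ones dictated by the lattice‑matching relation of Theorem~1 (equivalently Eq.~\eqref{cond}), chosen precisely so that the ratio $s_1/s_2$ carries the irrational factor $\sqrt{\eta/(1-\eta)}=\sqrt{q/(pd_1d_2)}$ needed for commensurability after mixing. Since $\hat U_\eta$ acts on position eigenstates by the symplectic rotation $|q_A\rangle|q_B\rangle\mapsto|\sqrt\eta q_A+\sqrt{1-\eta}q_B\rangle|\sqrt\eta q_B-\sqrt{1-\eta}q_A\rangle$, the output is a double sum over $(k_1,k_2)\in\mathbb Z^2$ of product position states whose locations are rational linear combinations of $(\mu_1+d_1k_1)$ and $(\mu_2+d_2k_2)$. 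The whole content of the theorem is then the reorganization of this $\mathbb Z^2$‑indexed sum into the claimed form.

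The core step is this re‑indexing, and it is where $\eta=q/n$ with $n=q+pd_1d_2$ enters. For this value the output comb teeth in each mode land exactly on the lattice of $\mathcal C_{nd_1}$, resp.\ $\mathcal C_{nd_2}$, which is the content of the preceding Lemma; I would invoke it to identify the ambient output code. I would then split each summation integer $k_i$ into a coarse part that is absorbed into the stabilizer periodicity $nd_i$ of the output basis kets and a fine part; the two fine parts are not independent but are governed by a single residue $j\in\mathbb Z_n$, because $n=q+pd_1d_2$ locks the two modes together. Solving the resulting linear congruences for the $\mu_i$‑dependent offsets is exactly a Chinese‑Remainder/Bézout computation: one needs $q$ invertible modulo $pd_1d_2$, i.e.\ $\gcd(q,pd_1d_2)=1$, which is precisely the standing hypothesis that the Bézout coefficients $\alpha_i,\beta_i$ exist, and the offset comes out as $\mu_i\alpha_i n$ since $\alpha_i q\equiv 1\pmod{d_i}$. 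Collecting terms yields $\tfrac1{\sqrt n}\sum_{j=0}^{n-1}|\mu_1\alpha_1 n+jpd_1d_2\rangle\otimes|\mu_2\alpha_2 n+jqd_2\rangle$.

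To finish I would verify the three consistency facts the formula presupposes. (i) The $n$ terms are mutually orthogonal: the labels $\nu_i(j)$ are distinct modulo $nd_i$ because $\gcd(q,n)=\gcd(pd_2,n)=1$, again a consequence of $\gcd(q,pd_1d_2)=1$. (ii) Logical information is preserved: the map $\mu_i\mapsto\nu_i(j)\bmod nd_i$ is injective because $\gcd(\alpha_i,d_i)=1$ (inherited from the Bézout identity), so $\mu_i$ is recoverable from the output code, as required by the decoder constructed later in this section. (iii) The prefactor is $1/\sqrt n$: norm‑matching against the input, using unitarity of $\hat U_\eta$ and the orthogonality of the $n$ output terms, forces the common amplitude to be $1/\sqrt n$; equivalently, one checks directly that all $n$ amplitudes in the repackaged sum are equal.

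The step I expect to be the genuine obstacle is the re‑indexing of the double sum — in particular, being careful that the output codes $\mathcal C_{nd_i}$ are generically sheared (not axis‑aligned) lattices, so their $\hat Z$‑eigenbasis kets are not literal position combs, and that the inter‑mode entanglement that survives is exactly the shared index $j$ and nothing more. A clean cross‑check I would include is the stabilizer route: by the Lemma, $\hat U_\eta$ maps $\mathcal C_{d_1}\otimes\mathcal C_{d_2}$ into $\mathcal C_{nd_1}\otimes\mathcal C_{nd_2}$, and inside the image, fixing the conjugated logical eigenvalues — the state must be a simultaneous eigenstate of $\hat U_\eta\hat Z_i\hat U_\eta^\dagger$ with eigenvalues $e^{2\pi i\mu_i/d_i}$ — determines a unique vector up to scale; substituting the claimed superposition and using Eq.~\eqref{commutation} together with the Weyl relations confirms it is that vector, which closes the argument.
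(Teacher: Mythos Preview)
Your proposal is correct and follows essentially the same route as the paper: write the input codewords as lattice sums, push them through $\hat U_\eta$, use the rational $\eta=q/n$ condition to land on the $\mathcal C_{nd_i}$ lattices, and re-index via the B\'ezout identities to extract the logical offsets $\mu_i\alpha_i n$ and the shared gauge index $j$. The only cosmetic difference is that the paper works in the displacement-operator picture (conjugating $T_i(\cdot)$ by $\hat U_\eta$ and computing the transformed vacuum $|0,0\rangle$ separately) whereas you act on position kets directly; your extra consistency checks (orthogonality of the $n$ terms, injectivity in $\mu_i$, normalization by unitarity) and the stabilizer cross-check are additions not present in the paper's argument.
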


\begin{proof}
We begin with input GKP codewords \( |\mu_1\rangle \in \mathcal{C}_{d_1} \), \( |\mu_2\rangle \in \mathcal{C}_{d_2} \), which can be formally written as superpositions over lattice displacements:
\[
|\mu_i\rangle = \sum_{s_i \in \mathbb{Z}} T_i(s_i \mathbf{u}_i + \mu_i \mathbf{u}_i/d_i) |0\rangle.
\]
The logical operator \( \hat{X}_i \) corresponds to a displacement \( T_i(\mathbf{u}_i/d_i) \). Applying \( U_\eta \) to the tensor product of these two codewords yields:

\begin{align}
&|\mu_1, \mu_2\rangle = U_\eta \left( |\mu_1\rangle \otimes |\mu_2\rangle \right) \nonumber\\
&= \sum_{s_1, s_2} U_\eta \left( T_1(s_1 \mathbf{u}_1 + \mu_1 \mathbf{u}_1/d_1) \otimes T_2(s_2 \mathbf{u}_2 + \mu_2 \mathbf{u}_2/d_2) \right) |0\rangle|0\rangle.
\end{align}
Using the transformation law under the beamsplitter:
\[
U_\eta T_1(\alpha) T_2(\beta) U_\eta^\dagger = T_1(\sqrt{\eta} \alpha + \sqrt{1 - \eta} \beta) \cdot T_2(\sqrt{\eta} \beta - \sqrt{1 - \eta} \alpha),
\]
each term becomes:
\[
T_1(\lambda_1(s_1,s_2)) \cdot T_2(\lambda_2(s_1,s_2)) |0,0\rangle,
\]
where the displacement vectors are:
\begin{align*}
\lambda_1 &= \sqrt{\eta} (s_1 \mathbf{u}_1 + \mu_1 \mathbf{u}_1/d_1) + \sqrt{1 - \eta} (s_2 \mathbf{u}_2 + \mu_2 \mathbf{u}_2/d_2), \\\\
\lambda_2 &= \sqrt{\eta} (s_2 \mathbf{u}_2 + \mu_2 \mathbf{u}_2/d_2) - \sqrt{1 - \eta} (s_1 \mathbf{u}_1 + \mu_1 \mathbf{u}_1/d_1).
\end{align*}
and $\ket{0,0}=U_{\eta} \ket{0}\ket{0}$. Simplifying the displacement vectors, we noted previously that the stabilizer lattice vector of the outputs are
\begin{equation}
    \mathrm{u}_3= \frac{q_1\mathrm{u}_1}{\eta} \quad  \mathrm{u}_4= \frac{q_2\mathrm{u}_2}{\eta} 
\end{equation}
and that:
\begin{equation}
    \sqrt{1-\eta} \mathbf{u}_2==\frac{1-\eta}{\eta}\frac{q_1}{d_1p_1} \mathbf{u}_1
\end{equation}
we arrive at:
\begin{align}
\lambda_1 &= q_1 \mu_1 \frac{\mathbf{u}_3}{nd1}+p_2d_1\mu_2 \frac{\mathbf{u}_3}{nd1} \\\\
\lambda_2 &= q_2 \mu_2 \frac{\mathbf{u}_4}{nd2}-p_1d_2\mu_1 \frac{\mathbf{u}_4}{nd2} \label{vectors}
\end{align}
with the same relations and with the use of the fact that the output GKP states are considered mod $n$ and mod $n$, we obtain the output vacuum state in terms of the basis vectors $\ket{\mu_3}$ and $\ket{\mu_4}$ as:
\begin{equation}
\ket{0,0}=\sum_{j=0}^{n-1} \ket{jpd_1d_2}\ket{jd_2q}
\end{equation}
Applying the necessary translations with vectors in Eq.~\ref{vectors}, we obtain:
\begin{equation}
\ket{\mu_1,\mu_2}=\sum_{j=0}^{n-1}\ket{q_1\mu_1+p_2d_1\mu_2+jpd_1d_2}\ket{q_2\mu_2-p_1d_2\mu_1+jd_2q}
\end{equation}
considering the states now are left mod $d_1$ and mod $d_2$ respectively, and using the Diophantine equations:
\[
\alpha_i q + \beta_i p d_1 d_2 = 1,
\]
so that \( \mu_i \alpha_i n \equiv \mu_i \mod d_i \)
We get:
\[
|\mu_1, \mu_2\rangle = \frac{1}{\sqrt{n}} \sum_{j=0}^{n-1}
\left|\mu_1 \alpha_1 n + j p d_1 d_2 \right\rangle \otimes
\left|\mu_2 \alpha_2 n + j q d_2 \right\rangle,
\]
which proves the theorem.

This specific structure ensures that:
\begin{itemize}
    \item The logical information \( (\mu_1, \mu_2) \) is encoded into distinct cosets in the enlarged lattices.
    \item The index \( j \) sweeps over the gauge orbit corresponding to a subgroup of the full lattice that does not alter the logical labels.
\end{itemize}

\end{proof}
\color{black}We note that the enlargement of the stabilizer lattices introduces a gauge degree of freedom that increases the size of the decoding space. While the logical subspace remains of dimension \(d_1 \times d_2\), the physical lattice now supports \(n\) gauge cosets, which can increase decoding complexity. In particular, minimum distance decoding must resolve among \(n\) degenerate candidates, potentially raising the complexity from \(\mathcal{O}(d)\) to \(\mathcal{O}(n \cdot d)\) in the worst case. This overhead is mitigated if gauge fixing is applied prior to logical decoding, as will be discussed in Subsection.~\ref{gausge}
\color{black}

\begin{corollary}[Symmetric Case]
In the symmetric setting where \( d_1 = d_2 = d \), \( q_1 = q_2 = 1 \), \( p_1 = p_2 = 1 \), and \( \alpha_1 = \alpha_2 = 1 \), the output simplifies to:
\[
|\mu_1, \mu_2\rangle = \frac{1}{\sqrt{n}} \sum_{j=0}^{n-1}
|\mu_1 n + j d^2\rangle \otimes |\mu_2 n + j d\rangle.
\]
\end{corollary}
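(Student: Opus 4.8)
The plan is to specialize the General Output State theorem to the symmetric parameter choice and verify that the Bézout (Diophantine) conditions used in its proof degenerate consistently. First I would set $d_1 = d_2 = d$, $q_1 = q_2 = 1$, $p_1 = p_2 = 1$, so that $q = q_1 q_2 = 1$ and $p = p_1 p_2 = 1$, and hence $n = q + p\, d_1 d_2 = 1 + d^2$. With these substitutions the transmissivity from Eq.~\eqref{cond} becomes $\eta = 1/(1+d^2) = 1/n$, which is the admissible rational value guaranteeing perfect transmission for two identical $d$-dimensional GKP codes.

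Next I would confirm that the Bézout identities $\alpha_i q + \beta_i p d_1 d_2 = 1$ reduce to $\alpha_i \cdot 1 + \beta_i\, d^2 = 1$, for which $\alpha_i = 1$, $\beta_i = 0$ is a valid solution; this matches the hypothesis $\alpha_1 = \alpha_2 = 1$ and is consistent with the requirement $\mu_i \alpha_i n \equiv \mu_i \pmod{d_i}$, since $n = 1 + d^2 \equiv 1 \pmod d$ gives $\mu_i \alpha_i n \equiv \mu_i \pmod d$ directly. Then I would substitute into the general output expression
\[
|\mu_1, \mu_2\rangle = \frac{1}{\sqrt{n}} \sum_{j=0}^{n-1}
\left|\mu_1 \alpha_1 n + j p d_1 d_2 \right\rangle \otimes
\left|\mu_2 \alpha_2 n + j q d_2 \right\rangle,
\]
replacing $\alpha_1 = \alpha_2 = 1$, $p d_1 d_2 = d^2$, and $q d_2 = d$, which immediately yields
\[
|\mu_1, \mu_2\rangle = \frac{1}{\sqrt{n}} \sum_{j=0}^{n-1}
|\mu_1 n + j d^2\rangle \otimes |\mu_2 n + j d\rangle,
\]
the claimed form, with $n = d^2 + 1$.

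The only genuine point requiring care — the "main obstacle," though it is mild — is bookkeeping of the moduli: the first tensor factor lives in $\mathcal{C}_{n d_1} = \mathcal{C}_{n d}$ and the second in $\mathcal{C}_{n d_2} = \mathcal{C}_{n d}$, so the labels $\mu_1 n + j d^2$ and $\mu_2 n + j d$ are understood modulo $n d$. I would check that as $j$ ranges over $0,\dots,n-1$ the $n$ terms are distinct in each factor (the gauge orbit is faithful): since $\gcd(d^2, nd) = d\gcd(d,n) = d$ because $n = d^2+1$ is coprime to $d$, the index $j d^2 \bmod nd$ takes $n$ distinct values as $j$ ranges over a full residue system mod $n$, and similarly for $j d \bmod nd$; this confirms the superposition is over $n$ orthogonal basis states and the normalization $1/\sqrt{n}$ is correct. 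Finally I would remark that $q_1 = q_2 = 1$ additionally forces the lattice-matching matrix $S_2 = \operatorname{diag}(\sqrt{p_2 q_1 / p_1 q_2},\, \sqrt{q_1 p_2 / p_2 q_1}) S_1$ to collapse to $S_2 = S_1$, so the symmetric case is exactly the scenario of two geometrically identical GKP codes, which makes the corollary the most natural benchmark instance of the general theorem.
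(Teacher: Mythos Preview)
Your proposal is correct and follows essentially the same approach as the paper: direct substitution of the symmetric parameter values into the general output state formula, yielding $p d_1 d_2 = d^2$ and $q d_2 = d$ so that the claimed expression drops out immediately. Your additional checks --- that the B\'ezout identity admits the solution $\alpha_i=1,\ \beta_i=0$, that $n=1+d^2$, and that the $n$ gauge-orbit terms are distinct modulo $nd$ --- go beyond what the paper's proof spells out but are consistent with it and add welcome rigor.
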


\begin{proof}
We consider the general output state from the previous theorem:
\[
|\mu_1, \mu_2\rangle = \frac{1}{\sqrt{n}} \sum_{j=0}^{n-1}
\left| \mu_1 \alpha_1 n + j p_2 d_1 d_2 \right\rangle \otimes
\left| \mu_2 \alpha_2 n + j q_1 d_2 \right\rangle,
\]
and impose the symmetric assumptions:
\[
d_1 = d_2 = d, \quad q_1 = q_2 = 1, \quad p_1 = p_2 = 1, \quad \alpha_1 = \alpha_2 = 1.
\]
Under these conditions, we substitute:
\[
p_2 d_1 d_2 = d^2, \quad q_1 d_2 = d, \quad \alpha_1 n = \alpha_2 n = n.
\]
Then the expression becomes:
\[
|\mu_1, \mu_2\rangle = \frac{1}{\sqrt{n}} \sum_{j=0}^{n-1}
| \mu_1 n + j d^2 \rangle \otimes | \mu_2 n + j d \rangle.
\]
This is a superposition over \( n \) values, with step size \( d \) in both modes, and offsets determined by \( \mu_1, \mu_2 \), each scaled by \( n \), consistent with embedding the logical indices into \( \mathcal{C}_{n d} \otimes \mathcal{C}_{n d} \). Thus, the output state lies in \( \mathcal{C}_{n d} \otimes \mathcal{C}_{n d} \) and has the simplified form claimed.

\end{proof}

\subsection{Gauge Degrees of Freedom and Symmetry}
\label{gausge}
We would like to understand the possibility of decoupling the logical degrees of freedom from the gauge degrees of freedom, to be able to retrieve the logical information witout any gauge coupling.  
\begin{figure}
    \centering
    \includegraphics[width=1\linewidth]{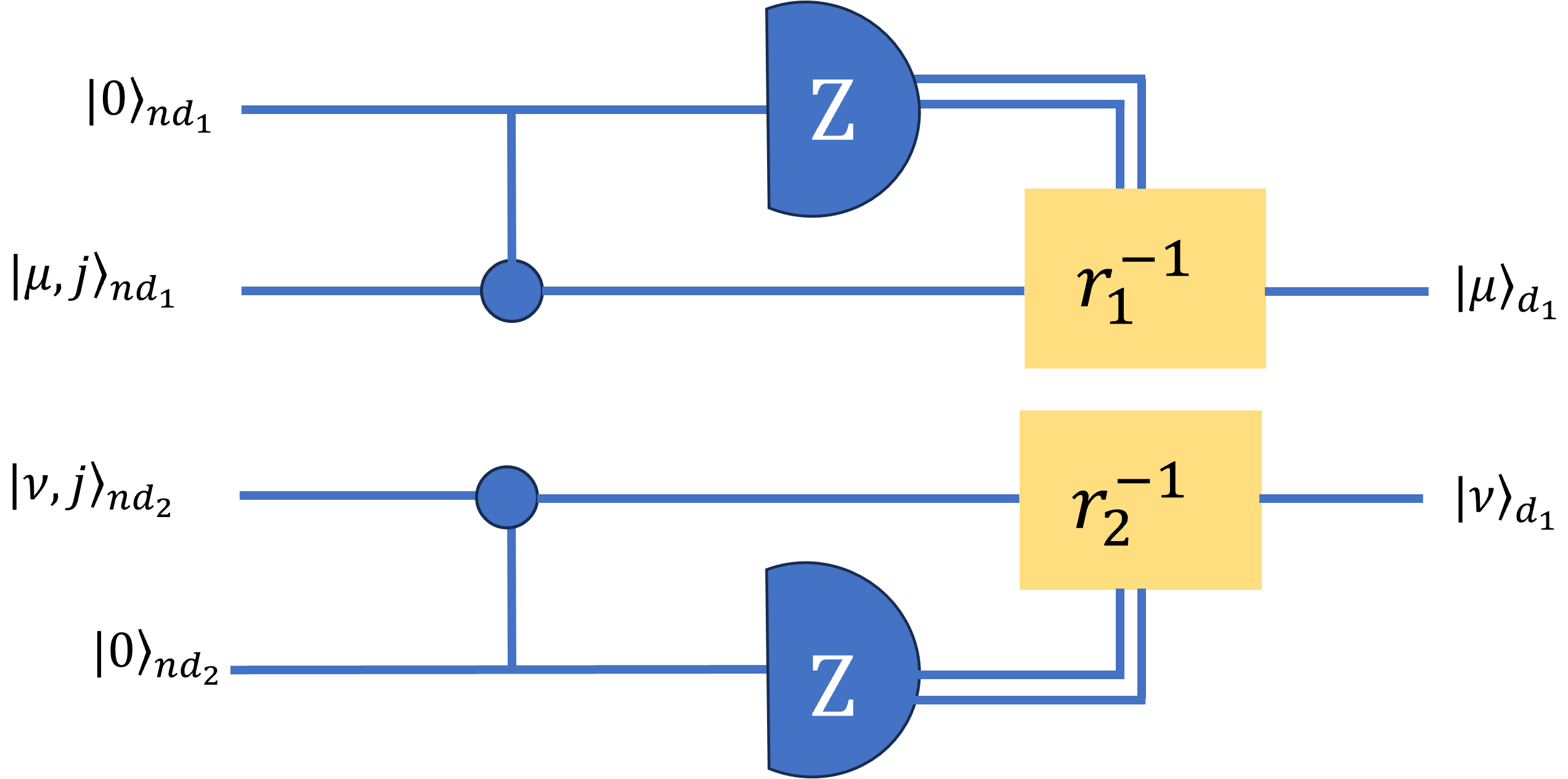}
    \caption{The gauge fixing decoder. A logical Controlled-X gate is applied to the ancillary mode, followed by measurement in the logical $Z$ basis. The gauge parameter is estimated $mod$-n and a logical inverse gate map is applied to the target mode to retrieve the logical state without gauge coupling.}
    \label{decoder}
\end{figure}
\begin{theorem}[Gauge Subsystem Structure]
The output state \(|\mu_1, \mu_2\rangle\) admits a factorization of the form:
\[
|\mu_1, \mu_2\rangle = |\mu_1, \mu_2\rangle_L \otimes |\Phi_n\rangle_G,
\]
where:
\begin{itemize}
  \item \(|\mu_1, \mu_2\rangle_L\) encodes the logical state in \( \mathcal{H}_{d_1} \otimes \mathcal{H}_{d_2} \),
  \item \(|\Phi_n\rangle = \frac{1}{\sqrt{n}} \sum_{j=0}^{n-1} |j\rangle \otimes |j\rangle\) is a maximally entangled state in the gauge subsystem \( \mathcal{H}_n \otimes \mathcal{H}_n \),
  \item The decomposition holds: \( \mathcal{C}_{nd_1} \otimes \mathcal{C}_{nd_2} \cong \mathcal{H}_{d_1} \otimes \mathcal{H}_n \otimes \mathcal{H}_{d_2} \otimes \mathcal{H}_n \).
\end{itemize}
\end{theorem}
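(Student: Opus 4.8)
The plan is to push the closed-form output state from the General Output State theorem through a Chinese Remainder Theorem (CRT) change of basis on the two output index rings. Recall that theorem gives
\[
|\mu_1, \mu_2\rangle = \frac{1}{\sqrt{n}}\sum_{j=0}^{n-1} \bigl|\mu_1\alpha_1 n + j\,p d_1 d_2\bigr\rangle \otimes \bigl|\mu_2\alpha_2 n + j\,q d_2\bigr\rangle,
\]
with the two factors living in $\mathcal{C}_{nd_1}$ and $\mathcal{C}_{nd_2}$, whose logical bases are labelled by $\mathbb{Z}_{nd_1}$ and $\mathbb{Z}_{nd_2}$. The first step is to harvest the arithmetic facts from the Bézout identities $\alpha_i q + \beta_i p d_1 d_2 = 1$: they force $\gcd(q, p d_1 d_2)=1$, and since $n = q + p d_1 d_2$ they also give $\gcd(n,d_1)=\gcd(n,d_2)=\gcd(n,q)=1$ and, in particular, that $p d_1 d_2$ and $q d_2$ are units modulo $n$, while $\alpha_i n \equiv \alpha_i q \equiv 1 \pmod{d_i}$.

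The second step is the CRT identification. Because $\gcd(n,d_i)=1$, the additive groups satisfy $\mathbb{Z}_{nd_i}\cong\mathbb{Z}_{d_i}\times\mathbb{Z}_n$, and this bijection of index sets lifts to a unitary (basis-permuting) identification $\mathcal{C}_{nd_i}\cong\mathcal{H}_{d_i}\otimes\mathcal{H}_n$; tensoring the two modes gives exactly the claimed $\mathcal{C}_{nd_1}\otimes\mathcal{C}_{nd_2}\cong\mathcal{H}_{d_1}\otimes\mathcal{H}_n\otimes\mathcal{H}_{d_2}\otimes\mathcal{H}_n$. Rather than the bare CRT map I would use the variant that rescales the $\mathbb{Z}_n$-slot by a unit: take $\phi_1(x)=\bigl(x\bmod d_1,\ (p d_1 d_2)^{-1}x\bmod n\bigr)$ and $\phi_2(x)=\bigl(x\bmod d_2,\ (q d_2)^{-1}x\bmod n\bigr)$, both still bijections since the multipliers are units mod $n$. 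The third step is a direct evaluation: $\phi_1$ sends $\mu_1\alpha_1 n + j p d_1 d_2$ to $(\mu_1,j)$ (reducing mod $d_1$ via $\alpha_1 n\equiv1$, and to $j$ mod $n$ by the rescaling), and $\phi_2$ sends $\mu_2\alpha_2 n + j q d_2$ to $(\mu_2,j)$ likewise. Hence the output state becomes $\frac{1}{\sqrt{n}}\sum_j |\mu_1\rangle_{d_1}\otimes|j\rangle_n\otimes|\mu_2\rangle_{d_2}\otimes|j\rangle_n$, and reordering the tensor factors yields $|\mu_1\rangle_{d_1}\otimes|\mu_2\rangle_{d_2}\otimes\bigl(\tfrac{1}{\sqrt{n}}\sum_j|j\rangle\otimes|j\rangle\bigr)=|\mu_1,\mu_2\rangle_L\otimes|\Phi_n\rangle_G$; by linearity the same factorization extends to an arbitrary input logical state.

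The last step is to justify the labels on the two factors. The gauge factor is immediate, since $\tfrac{1}{\sqrt{n}}\sum_j|j\rangle|j\rangle$ is by definition the maximally entangled $|\Phi_n\rangle$ on $\mathcal{H}_n\otimes\mathcal{H}_n$. For the logical factor I would check that the mod-$d_i$ tensor slot carries the embedded qudit Pauli algebra, i.e.\ that appropriate powers of the $\mathcal{C}_{nd_i}$ logical displacement operators restrict on it to $\bar X_i,\bar Z_i$ obeying $\bar Z_i\bar X_i = e^{2\pi i/d_i}\bar X_i\bar Z_i$, so that $|\mu_1,\mu_2\rangle_L$ is genuinely the transmitted logical state. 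The main obstacle is bookkeeping rather than conceptual: one must make the \emph{single} choice of CRT isomorphism that simultaneously keeps the mod-$d_i$ slot equal to the logical label and collapses the gauge amplitudes to the \emph{standard} $|\Phi_n\rangle$ rather than a locally twisted $(\mathbb{I}\otimes M)|\Phi_n\rangle$ — and the coprimality facts extracted in the first step are exactly what guarantee that the needed unit rescalings on the $\mathbb{Z}_n$ slots exist and are unitary, so the asserted factorization is a bona fide local change of basis and not a mere abstract index bijection. A secondary point to treat explicitly is the degenerate case $d_i=1$, where the $\mathcal{H}_{d_i}$ factor is trivial and the statement reduces to a qunaught-in, gauge-Bell-pair-out behaviour.
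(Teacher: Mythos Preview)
Your approach is essentially the same as the paper's: both define a basis relabeling on each $\mathcal{C}_{nd_i}$ that sends $\mu_i\alpha_i n + j r_i \mapsto |\mu_i\rangle_L\otimes|j\rangle_G$ and then regroup the sum to expose the $|\mu_1,\mu_2\rangle_L\otimes|\Phi_n\rangle_G$ factorization. Your version is more rigorous in that you explicitly invoke the CRT isomorphism $\mathbb{Z}_{nd_i}\cong\mathbb{Z}_{d_i}\times\mathbb{Z}_n$ and verify the coprimality facts (extracted from the B\'ezout identities) that make the relabeling a genuine bijection, whereas the paper simply asserts the map is well-defined.
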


\begin{proof}
We start with the general form of the output state derived in the previous theorem:
\[
|\mu_1, \mu_2\rangle = \frac{1}{\sqrt{n}} \sum_{j = 0}^{n - 1}
\left| \mu_1 \alpha_1 n + j p_2 d_1 d_2 \right\rangle \otimes
\left| \mu_2 \alpha_2 n + j q_1 d_2 \right\rangle,
\]
with all terms in the Hilbert space \( \mathcal{C}_{n d_1} \otimes \mathcal{C}_{n d_2} \), which has total dimension \( n d_1 \times n d_2 \).

We aim to show that this state admits a decomposition into a logical component and a gauge-entangled component. To do so, we define a bijective relabeling of the basis of \( \mathcal{C}_{n d_i} \cong \mathcal{H}_{d_i} \otimes \mathcal{H}_n \) via the map:
\[
\mu_i \alpha_i n + j r_i \longmapsto |\mu_i\rangle_L \otimes |j\rangle_G,
\]
where \( r_1 = p_2 d_1 d_2 \), \( r_2 = q_1 d_2 \), and \( j \in \mathbb{Z}_n \). This map is well-defined because the terms \( \mu_i \alpha_i n \) span a sublattice of spacing \( n \), and the shifts \( j r_i \) span cosets within that lattice, labeling the gauge degree of freedom.

We now define a new basis for the enlarged code space:
\[
|\mu_i n + j r_i\rangle \equiv |\mu_i\rangle_L \otimes |j\rangle_G,
\]
so that the full state becomes:
\[
|\mu_1, \mu_2\rangle = \frac{1}{\sqrt{n}} \sum_{j = 0}^{n - 1}
|\mu_1\rangle_L \otimes |j\rangle_G \otimes
|\mu_2\rangle_L \otimes |j\rangle_G.
\]

Regrouping:
\[
|\mu_1, \mu_2\rangle = \left(|\mu_1\rangle_L \otimes |\mu_2\rangle_L \right) \otimes \left( \frac{1}{\sqrt{n}} \sum_{j = 0}^{n - 1} |j\rangle \otimes |j\rangle \right).
\]
We identify the first tensor factor as the logical state \( |\mu_1, \mu_2\rangle_L \in \mathcal{H}_{d_1} \otimes \mathcal{H}_{d_2} \), and the second factor as a maximally entangled Bell-like state \( |\Phi_n\rangle \in \mathcal{H}_n \otimes \mathcal{H}_n \), defined by:
\[
|\Phi_n\rangle = \frac{1}{\sqrt{n}} \sum_{j = 0}^{n - 1} |j\rangle \otimes |j\rangle.
\]

Therefore, the state admits a decomposition:
\[
|\mu_1, \mu_2\rangle = |\mu_1, \mu_2\rangle_L \otimes |\Phi_n\rangle_G,
\]
and the total space decomposes as:
\[
\mathcal{C}_{n d_1} \otimes \mathcal{C}_{n d_2} \cong \mathcal{H}_{d_1} \otimes \mathcal{H}_n \otimes \mathcal{H}_{d_2} \otimes \mathcal{H}_n.
\]
\end{proof}

\begin{lemma}[Symmetry Origin of Gauge Subsystem]
The gauge subsystem arises from the residual symmetry induced by the mode mixing on the stabilizer structure. Define the input stabilizer lattices \( \Lambda_1, \Lambda_2 \) and let \( \Lambda_{\text{logical}} \subset \Lambda_1 \oplus \Lambda_2 \) denote the sublattice that preserves logical equivalence. Then:
\[
G = (\Lambda_1 \oplus \Lambda_2)/\Lambda_{\text{logical}}
\]
is a finite Abelian group of order \( n \) acting on the gauge index \( j \). This symmetry results in maximal entanglement between the gauge subsystems.
\end{lemma}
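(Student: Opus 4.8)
The plan is to realise $G$ as a lattice quotient and then read off its order and its action directly from the Output State theorem. Write the composite input stabilizer lattice as $L_{\mathrm{in}} = \Lambda_1 \oplus \Lambda_2 \subset \mathbb{R}^4$, with covolume $\omega(\mathbf{u}_1,\mathbf{v}_1)\,\omega(\mathbf{u}_2,\mathbf{v}_2) = (2\pi)^2 d_1 d_2$, and let $M_\eta \in \mathrm{Sp}(4,\mathbb{R})$ be the symplectic matrix by which $\hat U_\eta$ acts on displacement vectors as in Eq.~\eqref{eq:beamsplitter_displacement}; since $\det M_\eta = 1$, the transformed lattice $M_\eta L_{\mathrm{in}}$ has the same covolume. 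By the Output Code Dimensions lemma the output lies in $\mathcal{C}_{nd_1}\otimes\mathcal{C}_{nd_2}$, whose stabilizer lattice $L_{\mathrm{out}} = \Lambda_1^{\mathrm{out}}\oplus\Lambda_2^{\mathrm{out}}$ has covolume $(2\pi)^2 n^2 d_1 d_2$, and the matching construction of Section~\ref{results} guarantees $L_{\mathrm{out}} \subseteq M_\eta L_{\mathrm{in}}$ with index $n^2$. I would then \emph{define} $\Lambda_{\mathrm{logical}}$ as the preimage under $M_\eta$ of the subgroup of $M_\eta L_{\mathrm{in}}$ generated by $L_{\mathrm{out}}$ together with every transformed stabilizer that fixes the gauge label $j$, i.e.\ acts at most as an overall phase on each term $\ket{\mu_1\alpha_1 n + j p d_1 d_2}\otimes\ket{\mu_2\alpha_2 n + j q d_2}$ of the Output State superposition. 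These are precisely the displacements that preserve both the logical coset and the gauge orbit index, which is the operative meaning of ``preserving logical equivalence.'' Since $\Lambda_{\mathrm{logical}}$ is a full-rank sublattice of $L_{\mathrm{in}} \cong \mathbb{Z}^4$, the quotient $G = L_{\mathrm{in}}/\Lambda_{\mathrm{logical}}$ is automatically a finite abelian group; this already settles the structural half of the statement.

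To pin down the order I would compute $[L_{\mathrm{in}} : \Lambda_{\mathrm{logical}}]$. The naive preimage $M_\eta^{-1}(L_{\mathrm{out}})$ has index $n^2$ in $L_{\mathrm{in}}$; one then observes that the ``$Z$-type'' transformed stabilizers $\hat Z_1^{\mathrm{out}}\otimes(\hat Z_2^{\mathrm{out}})^{-1}$ act as the phase $e^{2\pi i(j-j)/n}=1$ on every term, hence lie in $\Lambda_{\mathrm{logical}}$ while being nontrivial modulo $L_{\mathrm{out}}$; adjoining this $\mathbb{Z}_n$'s worth of operators lowers the index by exactly $n$, so $|G| = n$. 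The same count, with cyclicity made explicit, follows by exhibiting a generator: the displacement sending term $j$ to term $j+1$ is $g = (\hat X_1^{\mathrm{out}})^{p d_1 d_2}\otimes(\hat X_2^{\mathrm{out}})^{q d_2}$, which lies in $M_\eta L_{\mathrm{in}}$ by construction, and the Bézout identities $\alpha_i q + \beta_i p d_1 d_2 = 1$ give $\gcd(q, p d_1 d_2)=1$, hence $\gcd(n, p d_2) = \gcd(n, q) = 1$ with $n = q + p d_1 d_2$; from this $g$ has order exactly $n$ modulo $L_{\mathrm{out}}$ and its coset generates all of $G$. Either route yields $G \cong \mathbb{Z}_n$, and the coset of $g$ acts on the gauge orbit $\{0,\dots,n-1\}$ by $j \mapsto j+1 \bmod n$, i.e.\ as the regular representation of $\mathbb{Z}_n$: a free and transitive action, which is the ``action on the gauge index $j$'' of the lemma.

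It then remains to deduce maximal entanglement. By the Gauge Subsystem Structure theorem the gauge sector of the output is $\ket{\Phi_n} = \frac{1}{\sqrt n}\sum_{j=0}^{n-1}\ket{j}\otimes\ket{j}$, and the $n$ amplitudes are all equal to $1/\sqrt n$ because $\hat U_\eta$ is unitary and the input $\ket{\mu_1}\otimes\ket{\mu_2}$ is a product of GKP computational-basis states carrying no relative weighting. Since the regular action of $G \cong \mathbb{Z}_n$ permutes these equally weighted terms transitively, tracing out one gauge factor gives the $G$-average $\frac1n\sum_j \ket{j}\bra{j} = \frac1n\,\mathbb{I}_n$, the maximally mixed state; equivalently $\ket{\Phi_n}$ is, up to a phase, the unique vector of $\mathcal{H}_n\otimes\mathcal{H}_n$ carrying the diagonal of the regular representation of $G$, which is maximally entangled. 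Hence the residual lattice symmetry $G$ is exactly what forces the gauge subsystems into a maximally entangled state.

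The step I expect to be the main obstacle is pinning down $\Lambda_{\mathrm{logical}}$ so that the index is $n$ rather than $n^2$: the obvious candidate $M_\eta^{-1}(L_{\mathrm{out}})$ overshoots by a factor $n$, and one must argue carefully that the extra transformed stabilizers acting only as trivial phases on the gauge orbit---the ``$Z\otimes Z^{-1}$'' sublattice---genuinely belong to $\Lambda_{\mathrm{logical}}$ and account for the missing factor. Made rigorous, this amounts to bringing the integer matrix expressing a basis of $\Lambda_{\mathrm{logical}}$ in a basis of $L_{\mathrm{in}}$ into Smith normal form and checking, using $\eta = q/n$ and the Bézout coprimality, that its elementary divisors multiply to $n$ with cyclic cokernel. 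Finiteness, abelianness, the translation action on $j$, and the maximal-entanglement conclusion then all follow routinely from the already established Output State and Gauge Subsystem theorems.
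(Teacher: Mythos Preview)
Your approach is conceptually the same as the paper's: both realise $G$ as the quotient of the input stabilizer lattice $\Lambda_1\oplus\Lambda_2$ by the sublattice $\Lambda_{\mathrm{logical}}$ of displacements acting trivially on the logical content, invoke the lattice structure for finiteness and abelianness, and tie the transitive action on the gauge label $j$ to the maximal entanglement of the gauge sector. The paper's own argument is in fact considerably sketchier than yours---it simply asserts $|G|=\mathrm{Vol}(\Lambda_{\mathrm{logical}})/\mathrm{Vol}(\Lambda_1\oplus\Lambda_2)=n$ by appeal to the Output Code Dimensions lemma, without confronting the $n$ versus $n^{2}$ issue you explicitly isolate, and it neither exhibits a generator nor invokes Smith normal form. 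Your explicit cyclic generator $g=(\hat X_1^{\mathrm{out}})^{pd_1d_2}\otimes(\hat X_2^{\mathrm{out}})^{qd_2}$ and the coprimality argument from the B\'ezout identity are genuine additions that strengthen the conclusion to $G\cong\mathbb{Z}_n$, which the paper does not claim.

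One small slip: the specific example $\hat Z_1^{\mathrm{out}}\otimes(\hat Z_2^{\mathrm{out}})^{-1}$ does not in general act as the trivial phase $e^{2\pi i(j-j)/n}$ on the output superposition. A direct computation on the term $\ket{\mu_1\alpha_1 n+jpd_1d_2}\otimes\ket{\mu_2\alpha_2 n+jqd_2}$ gives the $j$-dependent factor $e^{2\pi i j(pd_2-q)/n}$, and $pd_2-q$ need not vanish modulo $n=q+pd_1d_2$. The correct $Z$-type element contributing to $\Lambda_{\mathrm{logical}}$ requires specific exponents depending on $p,q,d_1,d_2$. Since you already flag this step as the main obstacle and propose the Smith normal form computation as the rigorous fallback, and since your independent $X$-type generator argument already establishes $|G|=n$, this is a cosmetic error rather than a gap in the plan.
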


\begin{proof}
Let \( \Lambda_1, \Lambda_2 \subset \mathbb{R}^2 \) be the input stabilizer lattices of the two GKP codes \( \mathcal{C}_{d_1} \) and \( \mathcal{C}_{d_2} \). These lattices are generated by primitive vectors \( (\mathbf{u}_1, \mathbf{v}_1) \), \( (\mathbf{u}_2, \mathbf{v}_2) \) satisfying symplectic conditions:
\[
\omega(\mathbf{u}_1, \mathbf{v}_1) = 2\pi d_1, \quad \omega(\mathbf{u}_2, \mathbf{v}_2) = 2\pi d_2.
\]

The stabilizer group of the joint system is \( \Lambda_1 \oplus \Lambda_2 \), i.e., the product lattice of displacements acting on both modes.

When the two modes interact through a crosstalk \( U_\eta \), the phase-space coordinates transform linearly:
\[
(\xi_1, \xi_2) \mapsto \left( \sqrt{\eta} \xi_1 + \sqrt{1 - \eta} \xi_2,\, \sqrt{\eta} \xi_2 - \sqrt{1 - \eta} \xi_1 \right).
\]
This transformation mixes the lattices \( \Lambda_1, \Lambda_2 \) into a coupled sublattice in \( \mathbb{R}^4 \), generating new displacement operators acting nontrivially across both modes.
However, not all elements of \( \Lambda_1 \oplus \Lambda_2 \) map to distinct logical states in the output code. There exists a sublattice \( \Lambda_{\text{logical}} \subset \Lambda_1 \oplus \Lambda_2 \) consisting of those displacement combinations that ct trivially on the logical space, i.e., leave logical states invariant up to stabilizer equivalence, and that commute with the logical operators of the output code \( \mathcal{C}_{n d_1} \otimes \mathcal{C}_{n d_2} \). Hence, the quotient group
\[
G = (\Lambda_1 \oplus \Lambda_2)/\Lambda_{\text{logical}}
\]
classifies the inequivalent cosets of logical actions modulo stabilizer redundancy. By construction, \( G \) is a finite Abelian group. Its order is given by:
\[
|G| = \frac{\text{Vol}(\Lambda_{\text{logical}})}{\text{Vol}(\Lambda_1 \oplus \Lambda_2)} = n,
\]
since the logical lattice scales by a factor of \( n \) due to the mode mixing-induced enlargement of the stabilizer unit cell (as previously shown in the output code dimension lemma).
This finite group \( G \) acts transitively on the label \( j \) that indexes the gauge orbit in the output state:
\[
|\mu_1, \mu_2\rangle = \frac{1}{\sqrt{n}} \sum_{j=0}^{n-1} |\mu_1^{(j)}\rangle \otimes |\mu_2^{(j)}\rangle.
\]
Thus, the existence of the gauge subsystem arises from this residual symmetry: different elements of the quotient \( G \) correspond to gauge-equivalent configurations of logical information that are indistinguishable by any logical observable. The entanglement of the gauge subsystems reflects the nontrivial correlation between these symmetry orbits across both modes.

\end{proof}

\subsection{Gauge Fixing and Logical Recovery}

Having the output state depending on the gauge degrees of freedom, does not allow for perfect recovery of the logical information unless the gauge is fixed and \textcolor{black}{then is collapsed} to one of its orbits. In what follows we provide an existence argument of such gauge fixing decoder.

\begin{theorem}[Gauge Fixing Procedure]
Let \(|\psi_{\mu_1, \mu_2}\rangle\) be the output state as above. Then there exists a unitary \( U_{\mathrm{gauge-fix}} \) acting on \( \mathcal{C}_{n d_1} \otimes \mathcal{C}_{n d_2} \) such that:
\[
U_{\mathrm{gauge-fix}} |\psi_{\mu_1, \mu_2}\rangle = |\mu_1\rangle \otimes |\mu_2\rangle \otimes |0\rangle_G,
\]
where \(|\mu_1\rangle, |\mu_2\rangle\) are logical states in \( \mathcal{C}_{d_1}, \mathcal{C}_{d_2} \), and \(|0\rangle_G\) is a fixed state in the gauge register. The unitary \( U_{\mathrm{gauge-fix}} \) is constructible from modular arithmetic operations and GKP logical Clifford gates (e.g., modular multiplication, modular subtraction, controlled shift).
\end{theorem}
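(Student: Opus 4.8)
The plan is to construct $U_{\mathrm{gauge-fix}}$ explicitly as a short circuit and verify its action on the state produced by the General Output State theorem. I would work in the already-established isomorphism $\mathcal{C}_{nd_1}\otimes\mathcal{C}_{nd_2}\cong \mathcal{H}_{d_1}\otimes\mathcal{H}_n\otimes\mathcal{H}_{d_2}\otimes\mathcal{H}_n$ coming from the Gauge Subsystem Structure theorem, under which the output is $|\mu_1\rangle_L\otimes|\mu_2\rangle_L\otimes|\Phi_n\rangle_G$ with $|\Phi_n\rangle=\tfrac{1}{\sqrt n}\sum_j|j\rangle|j\rangle$. The task then reduces to: (i) realize this isomorphism itself by a physical GKP operation, and (ii) disentangle and reset the gauge pair $|\Phi_n\rangle_G \mapsto |0\rangle_G$. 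For step (ii) the standard move is the qudit Bell-to-computational map: apply a gauge-register $\mathrm{CX}_n$ controlled on the first gauge mode and targeting the second, turning $|\Phi_n\rangle$ into $\tfrac{1}{\sqrt n}\sum_j|j\rangle\otimes|0\rangle$, then a gauge-register $\mathrm{QFT}_n$ (or inverse) on the first factor, sending $\tfrac{1}{\sqrt n}\sum_j|j\rangle \mapsto |0\rangle$. This leaves $|\mu_1\rangle_L\otimes|\mu_2\rangle_L\otimes|0\rangle_G\otimes|0\rangle_G$, and one absorbs the trivial $|0\rangle_G$ factors into the single $|0\rangle_G$ of the statement.

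The substantive content is showing that each ingredient is implementable by modular arithmetic and GKP logical Clifford gates acting on the physical $\mathcal{C}_{nd_i}$ modes. I would argue as follows. The relabeling $\mu_i\alpha_i n + j r_i \pmod{nd_i} \longleftrightarrow (\mu_i,j)$ with $r_1=p_2d_1d_2$, $r_2=q_1d_2$ is, concretely, a permutation of the $\mathbb{Z}_{nd_i}$ computational basis; since $\gcd(n,d_i)$ and the Bézout data make this map a $\mathbb{Z}$-module isomorphism $\mathbb{Z}_{nd_i}\cong\mathbb{Z}_{d_i}\oplus\mathbb{Z}_n$ (Chinese-Remainder-type splitting after clearing common factors), it is generated by modular multiplication by a unit and modular additions — each of which is a GKP Clifford on the enlarged code (multiplication by a unit is a symplectic rescaling of the lattice; addition/shift is a displacement). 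I would spell out the generator set: modular multiplication $|x\rangle\mapsto|ax \bmod N\rangle$ for $a$ invertible, modular addition $|x\rangle\mapsto|x+c\bmod N\rangle$, the controlled modular shift $|x\rangle|y\rangle\mapsto|x\rangle|y+x\bmod N\rangle$, and the $\mathrm{QFT}_N$; all of these are within the GKP logical Clifford group (they are the qudit Clifford generators, and GKP codes natively implement qudit Cliffords via Gaussian unitaries and phase-space shifts). Composing: first apply the basis-splitting isomorphism on each mode, then the gauge $\mathrm{CX}_n$ and $\mathrm{QFT}_n$, then (optionally) recombine; the whole thing is a finite product of these generators, hence a single unitary $U_{\mathrm{gauge-fix}}$ of the claimed type.

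The main obstacle I anticipate is the bookkeeping that makes the basis-splitting map genuinely well-defined and unitary — i.e., checking that $\mu_i\alpha_i n + j r_i$, as $\mu_i$ ranges over $\mathbb{Z}_{d_i}$ and $j$ over $\mathbb{Z}_n$, hits each residue mod $nd_i$ exactly once. This is where one must use $n=q+pd_1d_2$ together with the Bézout relations $\alpha_i q+\beta_i pd_1d_2=1$, and possibly a coprimality assumption (or a mild genericity condition on $p,q,d_1,d_2$) to guarantee injectivity; the case where $\gcd$'s are nontrivial may require either an extra hypothesis or a more careful Smith-normal-form argument, and I would flag this rather than hide it. A secondary, softer point is that the theorem only asserts \emph{existence}, so I would not optimize the gate count; once the map is shown bijective and built from the listed primitives, closure of the GKP logical Clifford group under composition finishes the argument. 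I would close by noting the operational reading in Fig.~\ref{decoder}: the $\mathrm{CX}_n$/measurement picture is exactly this circuit with the gauge $\mathrm{QFT}$ replaced by measure-and-feedforward, which also proves the claim with classical control if one prefers a non-unitary but simpler decoder.
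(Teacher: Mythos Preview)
Your approach is essentially the same as the paper's: both invoke the gauge--subsystem decomposition from the preceding theorem to reduce the problem to disentangling $|\Phi_n\rangle_G$, and both do so via a controlled modular shift on the gauge registers followed by a basis change, arguing that each step is a GKP logical Clifford. Your version is in fact slightly tighter on two points --- you use an explicit $\mathrm{QFT}_n$ to make the gauge reset genuinely unitary (the paper's ``discard or reinitialize'' step is not), and you correctly flag the bijectivity of the relabeling $(\mu_i,j)\mapsto \mu_i\alpha_i n + j r_i \bmod n d_i$ as the point needing a coprimality or Smith-normal-form check, which the paper asserts without verification.
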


\begin{proof}
We begin with the output state of the form derived earlier:
\[
|\psi_{\mu_1, \mu_2}\rangle = \frac{1}{\sqrt{n}} \sum_{j = 0}^{n - 1}
\left|\mu_1 \alpha_1 n + j r_1 \right\rangle \otimes
\left|\mu_2 \alpha_2 n + j r_2 \right\rangle,
\]
where \( r_1 = p_2 d_1 d_2 \), \( r_2 = q_1 d_2 \), and \( \alpha_i \in \mathbb{Z} \) satisfy the Bézout identities. This state lies in the Hilbert space \( \mathcal{C}_{n d_1} \otimes \mathcal{C}_{n d_2} \), which we know decomposes as:
\[
\mathcal{C}_{n d_1} \otimes \mathcal{C}_{n d_2} \cong \mathcal{H}_{d_1} \otimes \mathcal{H}_n \otimes \mathcal{H}_{d_2} \otimes \mathcal{H}_n.
\]
We aim to apply a unitary transformation \( U_{\mathrm{gauge-fix}} \) that maps the entangled gauge part into a separable form and re-expresses the state as:
\[
|\mu_1\rangle \otimes |\mu_2\rangle \otimes |0\rangle_G.
\]
To construct this unitary, observe that the terms \( \mu_1 \alpha_1 n + j r_1 \) index states of the form:
\[
|\mu_1\rangle_L \otimes |j\rangle_G,
\]
under the identification:
\[
|\mu_1\rangle_L = |\mu_1 \alpha_1 n \mod n d_1\rangle, \quad |j\rangle_G = |j r_1 \mod n d_1\rangle.
\]
Hence, the full state is:
\[
|\psi_{\mu_1, \mu_2}\rangle = |\mu_1\rangle_L \otimes |\mu_2\rangle_L \otimes \left( \frac{1}{\sqrt{n}} \sum_{j=0}^{n - 1} |j\rangle_G \otimes |j\rangle_G \right).
\]
Define the unitary \( U_{\mathrm{gauge-fix}} \) as follows:
\begin{itemize}
  \item Apply modular inverse multiplication: \( j \mapsto r_i^{-1} j \mod n \), to bring gauge indices into canonical form \( 0,1,\dots,n-1 \),
  \item Use a controlled modular subtraction to align both gauge indices,
  \item Shift the gauge register into a basis where \( |j\rangle \otimes |j\rangle \mapsto |0\rangle \otimes |j\rangle \),
  \item Then discard (or reinitialize) the second register.
\end{itemize}
All these operations can be constructed from GKP logical Clifford unitaries:
\begin{itemize}
  \item Modular multiplication and inverses can be implemented via shift and scaling gates,
  \item Controlled modular subtraction is a Clifford operation on GKP codes,
  \item Basis change from entangled to separable form is equivalent to a Bell-to-product basis transformation.
\end{itemize}
After applying \( U_{\mathrm{gauge-fix}} \), we obtain:
\[
U_{\mathrm{gauge-fix}} |\psi_{\mu_1, \mu_2}\rangle = |\mu_1\rangle \otimes |\mu_2\rangle \otimes |0\rangle_G,
\]
with \( |0\rangle_G \) a fixed state (e.g., the zero state in the canonical basis of \( \mathcal{H}_n \)).

\end{proof}

\subsection{Operational Construction of the Decoder}
\begin{figure}
    \centering
    \includegraphics[width=1\linewidth]{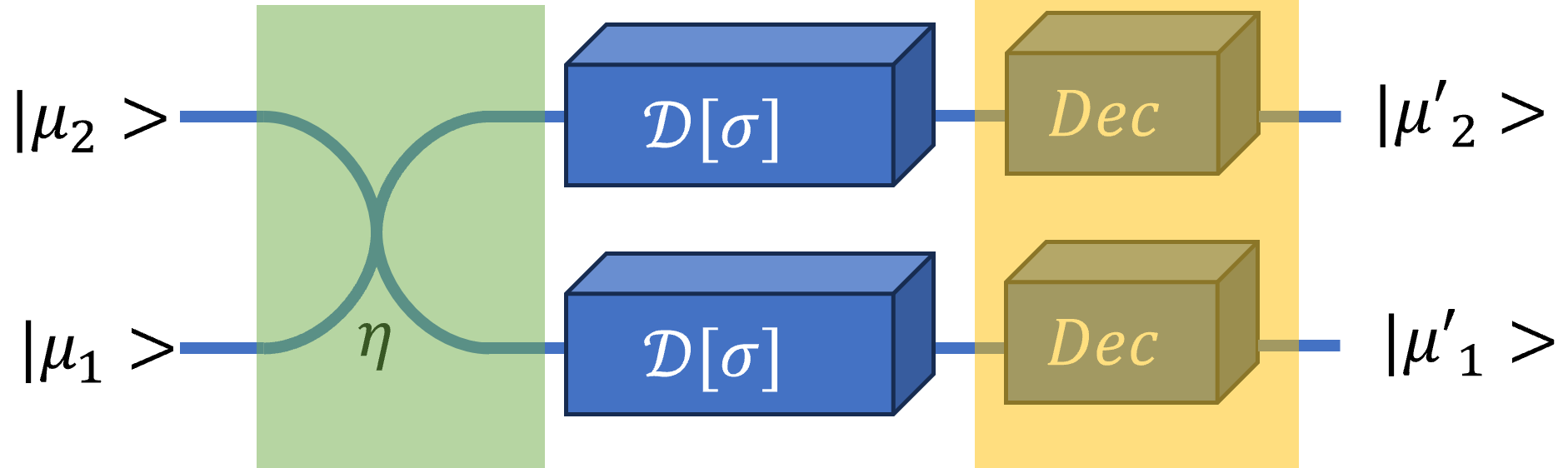}
    \caption{The scheme considered for the simulation. We consider transmitted GKP encoded EPR shares $\ket{\mu_1}$ and  $\ket{\mu_2}$ affected by crosstalk modeled by a beam splitter of transmitivity $\eta$. The two modes then are affected by displacement gaussian noise. Upon reception, an ideal \textcolor{black}{disjoint} decoding scheme as described in the text, tracing out the gauge degrees of freedom, is applied}
    \label{fig:enter-label}
\end{figure}
We give an operational construction of the decoder for gauge fixing the output state to recover the logical state:

\begin{enumerate}

\item \textbf{Ancilla Preparation:}  
Initialize two ancillary GKP modes, each with Hilbert space dimension \( n d_1 \) and \( n d_2 \), respectively, in the logical vacuum state \( |0\rangle \).

\item \textbf{Entangling Coupling:}  
Apply modular controlled-X (\textsf{CX}) gates from each output mode to its respective ancilla. These gates act as \( \textsf{CX} : |x\rangle|0\rangle \mapsto |x\rangle|x\rangle \mod n d_i \), entangling the mode and ancilla. The shared gauge index \( j \) is thus encoded redundantly in both ancillas.

\item \textbf{Measurement:}  
Measure the ancillas in the computational basis, yielding outcomes:
\[
x = \mu_1 \alpha_1 n + j r_1, \quad y = \mu_2 \alpha_2 n + j r_2
\]
modulo \( n d_1 \) and \( n d_2 \), respectively. These results reveal the value of the gauge index \( j \mod n \) via modular inversion:
\[
j = x \cdot r_1^{-1} \mod n = y \cdot r_2^{-1} \mod n
\]
since \( \gcd(r_1, n) = \gcd(r_2, n) = 1 \) by construction, hence this is a physical unitary gate.

\item \textbf{Feedback Correction:}  
Apply modular displacement gates (logical \( \bar{X} \) or \( \bar{Z} \)) to the output modes, conditioned on the measured value of \( j \). These correct for the gauge entanglement, mapping:
\[
|\mu_1 \alpha_1 n + j r_1\rangle \mapsto |\mu_1 \alpha_1 n\rangle, \quad
|\mu_2 \alpha_2 n + j r_2\rangle \mapsto |\mu_2 \alpha_2 n\rangle
\]
This is equivalent to applying a modular shift \( -j r_i \mod n d_i \), which is a GKP logical Clifford operation.

\item \textbf{Gauge Reset:}  
After correction, the ancillas are disentangled from the logical system. They can be reinitialized to \( |0\rangle \), discarded, or traced out. The logical state is now in canonical form.

\end{enumerate}
This sequence constitutes an operational realization of the unitary gauge-fixing map \( U_{\mathrm{gauge-fix}} \), acting unitarily and reversibly on the joint system, up to classical feedforward corrections based on the gauge index \( j \).
\color{black}
It is worth-noting that the above gauge-fixing protocol assumes that the receiver knows the gauge dimension \( n \), which depends on the rational structure of the average mode-mixing strength  \(\eta\), but no information on the stochastic crosstalk behavior. In scenarios without channel state information (CSI), this value is unknown, and modular inversion or correction cannot be deterministically applied. Therefore, this decoding method is valid in the \emph{with-CSI} regime. Extending gauge correction to the \emph{no-CSI} setting would require adaptive or statistical decoding strategies beyond the scope of this paper.
\color{black}

\subsection{Numerical Simulations and Fidelity Bounds}
\begin{figure}
    \centering
    \includegraphics[width=1\linewidth]{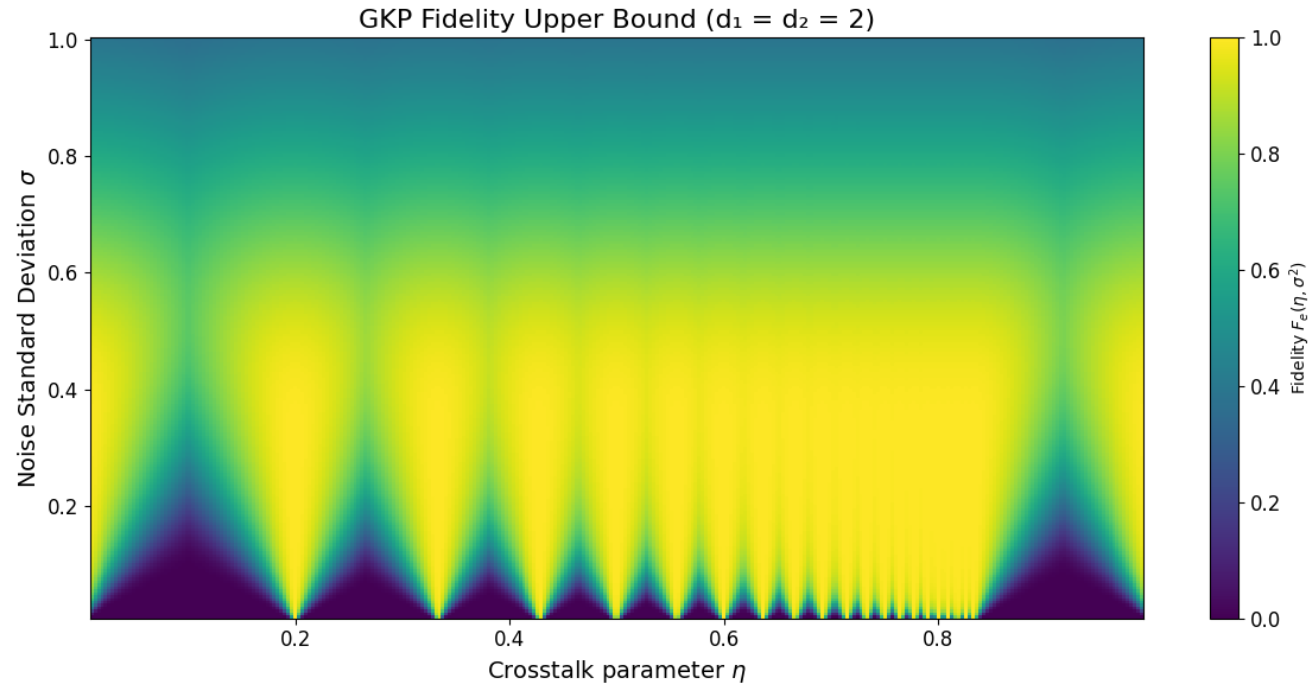}
    \caption{Upper bound on the entangleemnt fidelity for GKP-encoded EPR pairs transmission over mode mixing crosstalk with transmissivity $\eta$ and dipslacement Gaussian noise of variance $\sigma$. The plot assumes square GKP codes and dimensions $d_1=d_2=2$ and $\sigma_c=0.4$ }
    \label{Fidelity}
\end{figure}

\begin{figure}
    \centering
    \includegraphics[width=1\linewidth]{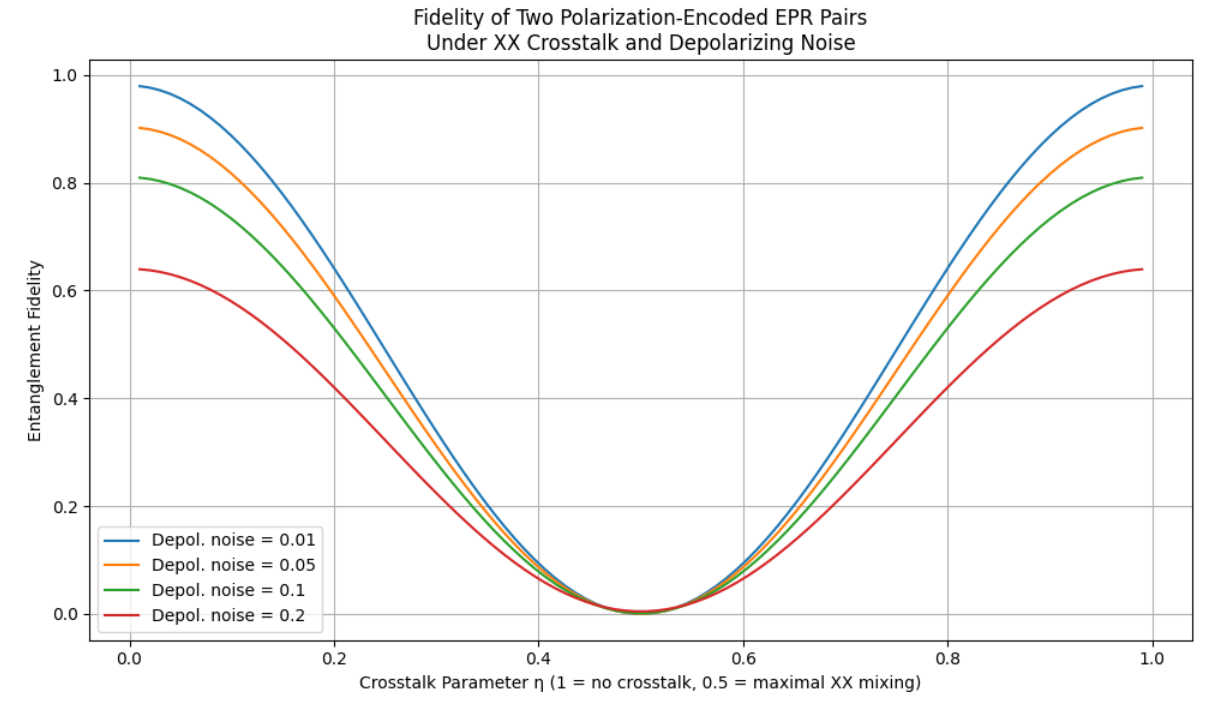}
    \caption{Entanglement fidelity of two polarization encoded EPR pairs under crosstalk and depolarizing noise. The tranmitted shares of the EPR pairs interact via a coherent $XX$ coupling with strength controlled by parameter $\eta$}
    \label{DV fidelity}
\end{figure}
In this subsection, we illustrate the practical implications of the multiplexing structure and rational transmissivity conditions by computing fidelity upper bounds for finite-energy GKP states under cross-talk and Gaussian noise. We consider a scenario where two independent GKP encoded EPR pairs  prepared in parallel: one between parties A and A', and the other between B and B'. Each half-pair, A' and B', is  transmitted through a shared channel with passive crosstalk with parameter \( \eta = \frac{q}{q + p d^2} \). The other halves, A and B, are held locally.

The joint state before transmission is:
\[
|\Psi\rangle = \frac{1}{2} \sum_{\mu, \nu = 0}^1 |\mu\rangle_A \otimes |\mu\rangle_{A'} \otimes |\nu\rangle_B \otimes |\nu\rangle_{B'}.
\]

After encoding and crosstalk interaction, the transmitted shares transform as:
\[
|\mu, \nu\rangle = \frac{1}{\sqrt{n}} \sum_{j = 0}^{n - 1} |n\mu + j r_1\rangle_{A'} \otimes |n\nu + j r_2\rangle_{B'},
\]
where \( r_1 = p_2 d_1 d_2 \), \( r_2 = q_1 d_2 \), and \( \alpha_1 = \alpha_2 = 1 \).
For example, with \( d_1 = d_2 = 2 \), \( \eta = \frac{1}{5} \), \( n = 5 \), the output becomes:
\[
|\Psi\rangle_{\text{after}} = \frac{1}{2\sqrt{5}} \sum_{\mu, \nu = 0}^{1} \sum_{j = 0}^{4}
|\mu\rangle_A \otimes |\nu\rangle_B \otimes |5\mu + 4j\rangle_{A'} \otimes |5\nu + 2j\rangle_{B'}.
\]

\subsubsection{Displacement Noise Model}
We assume a standard GKP error model where noise acts via random phase-space displacements drawn from an isotropic Gaussian distribution. A single-mode displacement noise channel with variance \( \sigma^2 \) is modeled as:

\begin{equation}
    \rho \rightarrow \int P(\alpha) D(\alpha)\rho D^\dagger(\alpha) d\alpha
\end{equation}
with the Gaussian distribution for  the displacement channel  given by:

\begin{equation}
P(\alpha) = \frac{1}{\sqrt{2\pi \sigma^2}} e^{-\frac{\alpha^2}{2\sigma^2}}
\end{equation}

This model captures thermal or stochastic noise affecting position and momentum symmetrically, and is appropriate for modeling finite-energy GKP states under Gaussian channels.

\subsubsection{Derivation of Ideal Decoder Fidelity}
Assuming exact rational transmissivity (i.e., perfect lattice matching), the decoding fidelity for a single GKP mode under Gaussian noise is the probability that the displacement remains within the Voronoi cell of the nearest lattice point. For square GKP lattices, this leads to:
\[
F_{\text{single}}(\sigma, d) = \text{erf}\left( \sqrt{\frac{\pi}{2 d \sigma^2}} \right)^2.
\]
For a two-mode product GKP code \( \mathcal{C}_{d_1} \otimes \mathcal{C}_{d_2} \), the joint fidelity is simply the square of the individual success probabilities:
\[
F_{\text{ideal}}(\sigma; d_1, d_2) = \left[ \text{erf} \left( \sqrt{\frac{\pi}{2 d_1 \sigma^2}} \right) \cdot \text{erf} \left( \sqrt{\frac{\pi}{2 d_2 \sigma^2}} \right) \right]^2.
\]
This expression assumes ideal (infinite-energy) GKP decoding in the absence of lattice mismatches.

 In Fig.~\ref{Fidelity}, the fidelity landscape as function of \( \eta \) and noise Level \( \sigma \) is given. We compute an upper bound on the entanglement fidelity \( F_e(\eta, \sigma^2) \) of the GKP code under Gaussian noise and cross-talk, where rational transmissivity points of the form \( \eta = \frac{q}{q + p d_1 d_2} \) are expected to support exact code embedding. The effective fidelity is bounded by:
\[
F_e(\eta, \sigma^2) \leq \max_{\eta_i} \exp\left( -\frac{\delta(\eta, \eta_i)^2}{2 \sigma^2} \right) F_{\text{ideal}}(\sigma),
\]
where \( \delta(\eta, \eta_i) = |\eta - \eta_i| L \) is the geometric mismatch between the actual \( \eta \) and rational alignment points \( \eta_i \), scaled by a lattice parameter \( L \). The color plot is produced across a grid of \( \eta \) and \( \sigma \), with bright bands indicating regions of high-fidelity transmission centered around rational \( \eta \)-values, with exponential decay in the neighberhood of each peak. In comparison to non-encoded EPR pairs in Fig.~\ref{DV fidelity}, we notice that bosonic encoding of EPR pairs via GKP codes improve drastically the fidelity of entanglement multiplexing in the presence of mode mixing crosstalk (XX coupling between polarization encoded EPR's), especially when the mixing is maximal around $\eta=0.5$. 

 Fig.~\ref{multiplexing}, a simulation of the trade-off between code dimension and fidelity was carried. We fix a rational alignment \( \eta = \frac{q}{q + p d^2} \) with \( d = d_1 = d_2 \), and explore the decay in entanglement fidelity \( F_e \) as \( d \) increases. The fidelity decays with increasing code dimension due to reduced spacing between lattice points and increased sensitivity to noise.  Each curve corresponds to a fixed noise level \( \sigma \), and rational transmissivity points \( \eta(d) \) are annotated for interpretation. Polynomial fits highlight smooth decay behavior, illustrating the trade-off between \textcolor{black}{code rate} (larger \( d \)) and achievable fidelity.

These simulations confirm the analytical expectations that high-fidelity code transmission occurs near rational values of \( \eta \), consistent with the gauge-based embedding theory. The numeric exploration also shows how fidelity decreases with code size due to practical effects of Gaussian noise, emphasizing the role of rational \( \eta \) tuning and finite squeezing in real devices. We note that for both simulations, the ideal decoder from the previous section was considered, which is not optimal in the presence of displacement gaussian noise.
\begin{figure}
    \centering
    \includegraphics[width=1\linewidth]{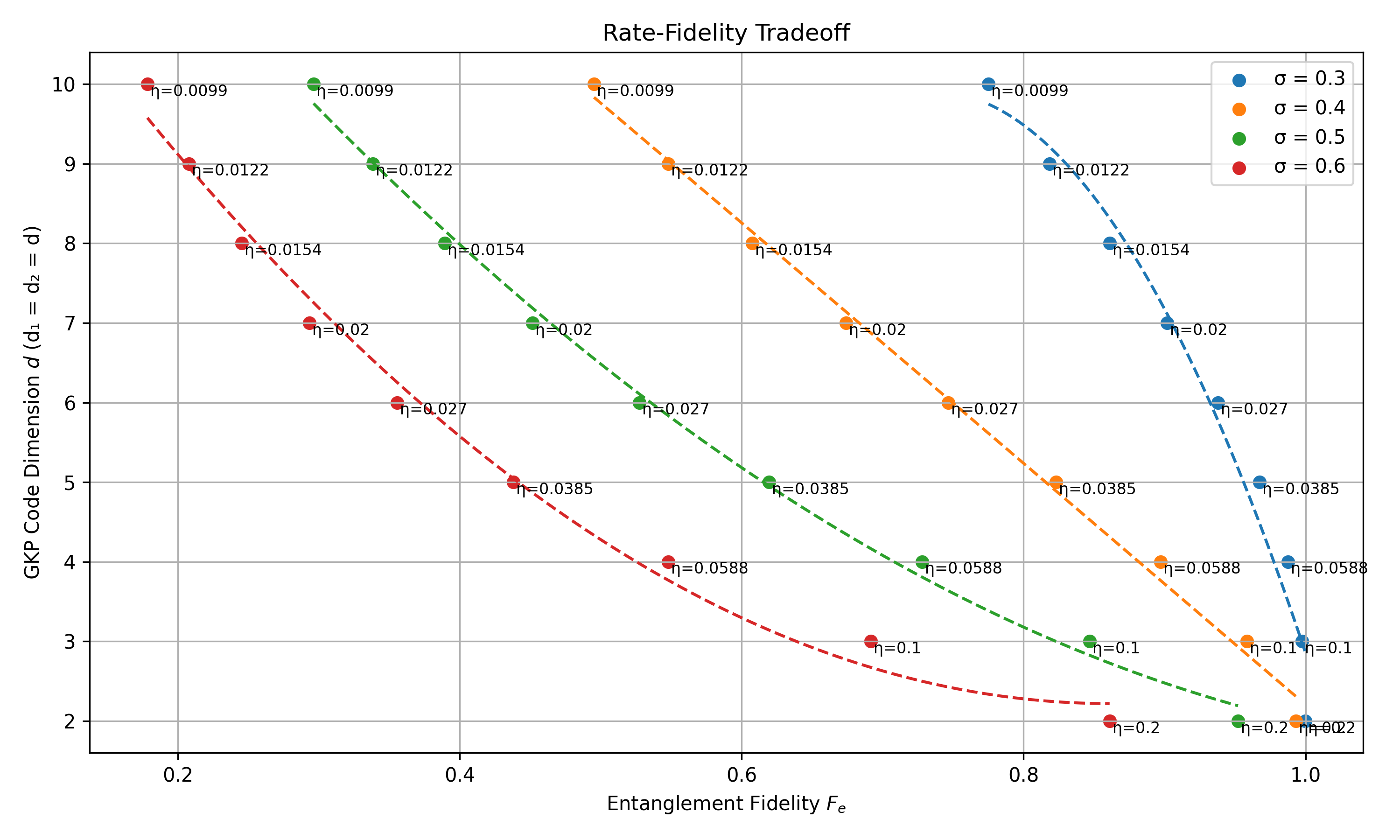}
    \caption{The rate-fidelity \textcolor{black}{tradeoff}}
    \label{multiplexing}
\end{figure}
\section{Conclusions and Future Directions}
\color{black}
\label{conclusions}
We have presented a framework for mitigating crosstalk in multiplexed quantum communication systems using GKP encoding. By embedding DV quantum information into CV modes, we identified precise lattice-matching conditions under which logical information can be preserved despite mode-mixing interactions. A key result is the identification of rational transmissivity values that allow the joint stabilizer structure to be preserved, enabling crosstalk-immune communication. Our work establishes a foundation for crosstalk-aware encoding in quantum networks based on CV hardware and opens up new directions in entanglement distribution through structured code design. Importantly, the principles and techniques developed here for crosstalk-resilient multiplexing are directly applicable to the development of robust Quantum Multiple-Input Multiple-Output (MIMO) communication systems, which are critical for future high-capacity quantum networks.

Additionally, we proposed an explicit decoder that leverages the symplectic structure of the output state, decomposes it into logical and gauge components, and applies a modular arithmetic-based recovery procedure. While this decoder is effective in idealized conditions, it is suboptimal in the presence of Gaussian displacement noise and stochasticity of the crosstalk. Future work will investigate optimized decoding strategies that account for physical noise models, including error rates and channel-specific distortions, particularly within the context of Quantum MIMO channels. In particular, we aim to adapt the decoder to scenarios with no CSI, where the average mode mixing  $\eta$ is unknown. Such robust decoding would enable practical implementation of GKP-based crosstalk mitigation in dynamic network environments and pave the way for truly scalable Quantum MIMO communications. We also plan to explore the direct application of these crosstalk mitigation strategies in full Quantum MIMO architectures, including the optimization of encoding and decoding schemes for multiple spatial modes.
\color{black}

\bibliographystyle{IEEEtran}
\bibliography{ref}
\end{document}